\numberwithin{equation}{section} \makeatletter
\renewcommand{\subsection}{\@startsection
{subsection}{2}{0mm}{\baselineskip}{-0.25cm}
{\normalfont\normalsize\bf}} \makeatother
\newtheorem{theorem}{Theorem}[section]
\newtheorem{lemma}[theorem]{Lemma}
\newtheorem{definition}[theorem]{Definition}
\newtheorem{remark}[theorem]{Remark}
\newtheorem{proposition}[theorem]{Proposition}
\newtheorem{example}[theorem]{Example}
\newtheorem{ass}[theorem]{Assumption}
\newtheorem{notation}[theorem]{Notation}
\def \F {\mathcal F}
\def \H {\mathcal H}
\def \C {\mathcal C}
\def \V {\mathcal V}
\def \N {\mathcal N}
\def \L {\mathcal L}
\def \P {\mathbf P}
\def \Q {\mathbf Q}
\def \R {\mathbb R}
\def \I {{\mathbf 1}}
\def \bF {\mathbb F}
\def \bH {\mathbb H}
\def \bE {\mathbb E}
\newcommand{\ud}{\mathrm d}
\newcommand{\ds}{\displaystyle}
\newcommand{\esp}[2][\mathbb E] {#1\left[#2\right]}
\newcommand{\condespf}[2][\F_t]       {\mathbb E\left.\left[#2\right|#1\right]}
\newcommand{\condesphto}[2][\H_{t_0}]       {\mathbb E\left.\left[#2\right|#1\right]}
\newcommand{\condesph}[2][\H_t]       {\mathbb E\left.\left[#2\right|#1\right]}
\newcommand{\condespho}[2][\H_{0}]       {\mathbb E\left.\left[#2\right|#1\right]}
\author[C.~Ceci]{Claudia  Ceci}
\author[K.~Colaneri]{Katia Colaneri}
\author[A.~Cretarola]{Alessandra Cretarola}
\begin{document}
\address{Claudia  Ceci, Department of Economics,
University ``G. D'Annunzio'' of Chieti-Pescara, Viale Pindaro, 42,
I-65127 Pescara, Italy.}\email{c.ceci@unich.it}

\address{Katia Colaneri, Department of Economics,
University ``G. D'Annunzio'' of Chieti-Pescara, Viale Pindaro, 42,
I-65127 Pescara, Italy.}\email{katia.colaneri@unich.it}

\address{Alessandra Cretarola, Department of Mathematics and Computer Science,
 University of Perugia, via Vanvitelli, 1, I-06123 Perugia, Italy.}\email{alessandra.cretarola@dmi.unipg.it}

\title{A Benchmark Approach to Risk-Minimization under Partial Information}

\date{}

\begin{abstract}
\begin{center}
In this paper we study a risk-minimizing hedging problem for a semimartingale incomplete financial market where d+1 assets are traded continuously and whose price is expressed in units of the num\'{e}raire portfolio. According to the so-called {\em benchmark approach}, we investigate the (benchmarked) risk-minimizing strategy in the case where there are restrictions on the available information. More precisely, we characterize the optimal strategy as the integrand appearing in the Galtchouk-Kunita-Watanabe decomposition of the benchmarked claim under partial information and provide its description in terms of the integrands in the classical Galtchouk-Kunita-Watanabe decomposition under full information via dual predictable projections.
Finally, we apply the results in the case of a Markovian jump-diffusion driven market model where the assets prices dynamics depend on a stochastic factor which is not observable by investors.
\end{center}
\end{abstract}

\subjclass[2010]{91G10, 60G46, 60J25}

\keywords{risk-minimization, Galtchouk-Kunita-Watanabe decomposition, num\'{e}raire portfolio, benchmark approach, partial information, Markovian jump-diffusion models}

\maketitle

\section{Introduction}

In this paper we investigate a risk-minimizing hedging problem under the so-called {\em benchmark approach} (see e.g.~\cite{p2005} and~\cite{ph}) for derivatives in an incomplete financial semimartingale market model where there are restrictions on the information available to traders.
Furthermore, we apply the results to discuss the case of a Markovian jump-diffusion driven market model. \\
Here the incomplete information framework is modeled by the presence of two filtrations, the first one denoted by $\bF$ representing the full knowledge on the market and the other $\bH$, supposed to be smaller than $\bF$, standing for the available information level.
Partially observable models often describe realistic financial scenarios. A common example is given by the case where investors may observe the underlying dynamics only at fixed times or when the market is influenced by an unobservable stochastic factor that may represent the trend of a correlated market.

Concerning the risk-minimizing approach (see e.g.~\cite{fs86},~\cite{fs1991} and~\cite{sch01}), in the case of restricted information there are few results in the current literature, as far as we are aware.
Important contributions in this direction, performed with the savings account as reference unit, can be found in~\cite{s94} and ~\cite{ccr}.
In~\cite{s94}, the author provides an explicit expression for risk-minimizing hedging strategies under restricted information in terms of predictable dual projections, whereas in~\cite{ccr}, by proving a version of the Galtchouk-Kunita-Watanabe decomposition that works under partial information, the authors extend the results of~\cite{fs86} to the partial information framework and show how their result fits in the approach of~\cite{s94}. All the above-mentioned papers deal with the case where the risky assets prices are modeled directly under a risk-neutral measure. The general semimartingale case, that corresponds to assume that the risky assets behavior is described by a semimartingale,  is more difficult to handle even under full information and the weaker concept of local risk-minimization must be introduced. In particular in~\cite{fs1991} it is proved that the existence of an optimal strategy is equivalent to the
existence of the so-called {\em F\"{o}llmer-Schweizer} decomposition of the contingent claim. This decomposition is, in a certain sense, a generalization of the Galtchouk-Kunita-Watanabe decomposition that holds when the underlying price process is a local martingale.
In the case where the assets prices dynamics have continuous trajectories, the problem can be solved with the aid of the {\em minimal martingale measure} (see e.g. \cite{sch01}).\\
A first step about the study of local risk-minimization under partial information is represented by~\cite{fs1991} where the authors complete the information starting from the reference filtration and recover the optimal strategy by means of predictable projections with respect to the enlarged filtration that makes the market complete. Furthermore, an application of the local risk-minimizing approach in the case of incomplete information to defaultable markets in the sense of~\cite{fs1991} can be found in~\cite{bc}. Another contribution in this direction is given by~\cite{ccr2} where the authors   provide a version of the F\"ollmer-Schweizer decomposition of a square-integrable random variable (that typically represents the payoff of a contract) with respect to the underlying price process, that works under partial information and study the relationship between this decomposition and the existence of a locally risk-minimizing strategy according to the partial information framework.

In this paper, we discuss the risk-minimizing problem for partially observable semimartingale models under the benchmark approach. According to this modeling framework, even under the absence of an equivalent local martingale measure, contingent claims can be consistently evaluated by means of the so-called {\em real-world pricing formula}, see \eqref{realwpf}, which generalizes standard valuation formulas, where the discounting factor is the num\'{e}raire portfolio and the pricing measure is the physical probability measure $\P$.
Risk-minimization under the benchmark approach has been also studied in~\cite{bcp} for general semimartingale models
and in~\cite{dp} where the example of jump-diffusion driven market models is also discussed. However, both of the papers provide results working in the case of complete information. Here, we study the problem in a restricted information setting, which will correspond to {\em benchmarked risk-minimization under partial information}. One of our main achievements concerns the characterization of the optimal strategy via the Galtchouk-Kunita-Watanabe decomposition holding under partial information, see Theorem \ref{prop:fs}.

Most of the literature concerning pricing and hedging of contingent claims focuses on the cases where the underlying asset prices dynamics are given by diffusions or pure jump processes. The latter are usually employed to model high frequency data. Jump driven market models, in fact, are useful to deal with intraday information, when assets prices can be assumed to be piecewise constant and jumps occur when new information arrive (see e.g.~\cite{C},~\cite{cg},~\cite{cg1},~\cite{frey2000},~\cite{fr2001} and~\cite{rs2000}).
Nevertheless, most of the real situations can be covered by jump-diffusion market models. This case allows prices dynamics to have piecewise continuous trajectories but they may present larger jumps, see~\cite{BKR} and~\cite{r2002}. In the models discussed in these papers, the jump part is described by point processes.
Here we analyze the continuous time jump-diffusion driven market model where in particular the jump part in the underlying assets prices dynamics is given by a general random measure. More precisely,  we compute explicitly the benchmarked risk-minimizing strategy under partial information in a Markovian setting (see Proposition \ref{prop:strategia_esplicita}) and characterize the optimal portfolio value in terms of a suitable function solving a partially differential equation (in short PDE) with final condition. Finally, to establish a link with the classical asset pricing theory we investigate the relationship between the the num\'{e}raire portfolio and the existence of a martingale measure for a general jump-diffusion driven market model.


The paper is organized as follows. In Section \ref{sec:numeraire} we describe the financial market in the benchmark perspective. Then, in Section \ref{sec:brm} we study the benchmarked risk-minimization under partial information and provide an explicit representation of the optimal strategy. An application to the case of a Markovian jump-diffusion driven market model can be found in Section \ref{sec:jumpdiff}. Finally, in Section \ref{sec:rel} we discuss the relationship between the num\'{e}raire portfolio and the existence of a martingale measure for the underlying market model.

\section{Benchmark Framework}\label{sec:numeraire}

We fix a probability space $(\Omega,\F,\P)$ endowed with a filtration $\bF:=\{\F_t, \ t\in [0,T]\}$ that satisfies the usual conditions of right-continuity and completeness, where $T \in (0,+\infty)$ is a fixed and finite time horizon. Moreover, we assume that $\F=\F_T$. We consider a financial market model with $d$ $\bF$-adapted, nonnegative primary
security account processes $S^j=\{S_t^j,\ t \in [0,T]\}$, $j \in \{1,2,\ldots,d\}$,
$d\in \{1,2,\ldots\}$. In addition, the $0$-th security
account $S_t^0$ denotes the value of the $\bF$-adapted strictly positive
savings account at time $t \in [0,T]$.
The $j$-th {\em primary security account} holds units of the $j$-th primary security plus its accumulated
dividends or interest payments, $j \in
\{1,2,\ldots, d\}$.
In this setting, market participants can trade in order to reallocate
their wealth.

\begin{notation}
In the sequel we will put
\[
\int X_s \ud Y_s := \left\{\int_0^t X_s \ud Y_s, \ t \in [0,T] \right\}
\]
to identify the stochastic integral process
for every pair $(X,Y)$ of $\R$-valued stochastic processes such that the above expression is well-defined and we write
\[
\int X_s^\top \cdot \ud Y_s := \left\{\int_0^t X_s^\top \cdot \ud Y_s, \ t \in [0,T] \right\}
\]
if $X$ is a vector-valued process and $Y$ is a vector-valued or a matrix-valued process. Here $a^\top$ identifies the transpose of the vector-valued process $a$ and the symbol $\int_0^t X_s^\top \cdot \ud Y_s$ means the sum of the integrals componentwise, i.e. $\sum_i\int_0^t X_s^i \cdot \ud Y^i_s$, if $Y$ is a vector-valued process and $\sum_i\int_0^t X^i_s \cdot \ud Y^{i,j}_s$ for every $j$, if $Y$ is a matrix-valued process.
\end{notation}

\begin{definition}
We call  {\em strategy} an $\R^{d+1}$-valued process
$\delta=\{\delta_t=(\delta_t^0,\delta_t^1,\ldots,\delta_t^d)^\top$,
$t\in [0,T]\}$,
where for each $j \in
\{0,1,\ldots,d\}$, the process $\delta^j=\{\delta_t^j,\ t \in[0,T]\}$
is $\bF$-predictable and integrable with respect to $S^j$.
\end{definition}

\noindent As usual
$\delta_t^j$, for every $j \in \{0,...,d\}$, indicates the number of units of the $j$-th
security account that are held at time $t \in [0,T]$ in the associated
portfolio whose value is given by a c\`{a}dl\`{a}g $\bF$-adapted process $S^\delta=\{S_t^\delta,\ t \in
[0,T]\}$ such that
$$
S_{t^-}^\delta:=\delta^\top_t \cdot S_{t^-}=\sum_{j=0}^d \delta_t^j S_{t^-}^j, \quad t \in [0,T],
$$
where $S=\{S_t=(S_t^0,S_t^1,\ldots,S_t^d)^\top,\ t
\in [0,T]\}$.
A strategy $\delta$ and the corresponding 
portfolio $S^\delta$ are
said to be {\em self-financing} if
$$
S_t^\delta=S_0^\delta+\int_0^t\delta^\top_u \cdot \ud S_u = S_0^\delta + \sum_{j=0}^d \int_0^t\delta_u^j\ud S_u^j, \quad t \in [0,T].
$$
Denote by $\V_x^+$, ($\V_x$), the set of
all strictly positive, (nonnegative), finite, 
self-financing portfolios, with initial
capital $x >0$, ($x \ge 0$).
\begin{definition} \label{def:numport}
A portfolio
$S^{\delta_*} \in \V_1^+$ is called a {\em num\'{e}raire portfolio}, if 
any portfolio $S^\delta \in \V_1$, when denominated in units of $S^{\delta_*}$,
forms an $(\bF,\P)$-supermartingale, that is,
\begin{equation}\label{supermprop}
\frac{S_t^\delta}{S_t^{\delta_*}} \ge \condespf{\frac{S_s^\delta}{S_s^{\delta_*}}},
\end{equation}
for all $0 \leq t \leq s\leq T$.
\end{definition}
\noindent According to the benchmark approach we now make the main assumption of the paper.
\begin{ass}\label{ass:numport}
There exists a 
num\'{e}raire portfolio $S^{\delta_*}
\in \V_1^+$.
\end{ass}
It is worth stressing that this assumption is rather weak since it is satisfied by a several realistic models used in practice, see e.g.~\cite{kk} and~\cite[Chapter 14, Section 1]{ph}.

\noindent Moreover, jump-diffusion and It\^o process driven market models have a num\'{e}raire portfolio under very general assumptions (see~\cite[Chapter 10, Section 2 and Chapter 14, Section 1]{ph})  where benchmarked nonnegative portfolios turn out to be $(\bF,\P)$-local martingales and, thus, $(\bF,\P)$-supermartingales.

\noindent Throughout the rest of the paper we choose the num\'{e}raire portfolio as {\em benchmark}. We call any security, when expressed in units of the num\'{e}raire portfolio, a benchmarked security and refer to this procedure as {\em benchmarking}. If a benchmarked price process is an $(\bF,\P)$-martingale, then we call it {\em fair}. In this case we would have equality in relation \eqref{supermprop} of Definition \ref{def:numport}. The benchmarked value of a portfolio $S^\delta$ is of particular interest and is given by the ratio
\[
\hat S_t^\delta=\frac{S_t^\delta}{S_t^{\delta_*}}
\]
for all $t \in [0,T]$. Of course the benchmarked value of the num\'{e}raire portfolio is $\hat S^{\delta^*}_t=1$ for every $t \in [0,T]$.
\noindent As usual in these situations, we want to exclude the presence of arbitrage at least in a strong form. The definition below provides the classical characterization of strong arbitrage.
\begin{definition}
A benchmarked nonnegative self-financing
portfolio $\hat S^\delta$ 
is a {\em strong arbitrage} if
it starts with zero initial capital, that is $\hat S_0^\delta=0$, and
generates some strictly positive wealth with strictly positive
probability at some later time $t \in (0,T]$, that is
$\P(\hat S_t^\delta>0)>0$.
\end{definition}
\noindent Note that, by the supermartingale property \eqref{supermprop}, assuming the existence of a strictly positive num\'{e}raire portfolio guarantees that strong arbitrage is
excluded, see e.g. Theorem 4.2 of~\cite{p2008}. Other classical forms of arbitrage have been investigated recently: for instance, the existence of the num\'{e}raire portfolio characterizes the absence of arbitrage in the sense of the NUPBR condition and the absence of arbitrage of the first kind, see e.g. in~\cite{kk} and~\cite{kar12}.

\noindent Let $\hat H_T$ be a nonnegative square integrable random variable representing the benchmarked payoff of a European type contingent claim with maturity $T$. It is known that in a complete market there always exists a
self-financing strategy $\delta$ that replicates this claim; in other words the associated portfolio value $\hat S_T^{\delta}$ at terminal time
equals $\hat H_T$ with probability one, see~\cite{ph}.
The benchmarked portfolio at every time $t\in [0,T]$ is then characterized by the real world pricing formula
\begin{equation} \label{realwpf}
\hat S_t^{\delta}=\condespf{\hat H_T}
\end{equation}
which means that the portfolio value is a square integrable $(\bF, \P)$-martingale, since $\hat H_T$ is square integrable.
As also observed in \cite{ph}, under the benchmark approach there exist other self-financing portfolios that deliver the claim at final time, but these nonnegative portfolios are just $(\bF,\P)$-supermartingales (and not $(\bF, \P)$-martingales), then they are more expensive than the $(\bF,\P)$-martingale $\hat S^{\delta}$ given in \eqref{realwpf}, see~\cite{ph} for further details.

\noindent When the market is incomplete it is impossible, in general, to find a self-financing portfolio that replicates the claim at final time. Then, one needs to relax one of these two conditions, the replication or the self-financing property, and provide a criterion to determine the optimal portfolio. The most common methods are the {\em mean-variance hedging} and the (local) {\em risk-minimization} which correspond to keep the self-financing property of the optimal strategy and the replication constraint respectively.
Here, we investigate the case of asset prices that represent $(\bF, \P)$-local martingales when benchmarked, and study the risk-minimizing approach in the case where there are restrictions on the available information.
When benchmarked primary security account processes are $(\bF,\P)$-local martingales, the problem can be solved under full information by computing the orthogonal projection of $\hat H_T$ on the set of stochastic integrals of the type $\int \gamma_u^\top \ud \hat S_u$, where $\gamma \in \Theta (\bF)$ 
(see Definition \ref{thetaH-F} below) and $\hat S=\{\hat S_t=(\hat S_t^0,\hat S_t^1,\ldots,\hat S_t^d)^\top,\ t \in [0,T]\}$ denotes the vector of the benchmarked primary security accounts, that is $\hat S_t^j = \frac{S_t^j}{S_t^{\delta_*}}$, $j=0,..,d$  (see~\cite{bcp} for further details).
The approach to the problem for partially observable models requires the application of a suitable orthogonal 
decomposition holding in the more general setting as we will see in the next section.

\section{Benchmarked Risk-Minimization under Restricted Information}\label{sec:brm}

Our aim is to discuss the risk-minimization as originally introduced in \cite{fs86}, under the benchmark approach in the case where there are restrictions on the available information.
Here, instead of the usual savings account, we use the num\'{e}raire portfolio as num\'{e}raire and benchmark and, in this setting, we wish to
construct risk-minimizing hedging strategies from less information. \\
Suppose now that the hedger does not have at her/his disposal the full information represented by $\bF$. Her/his strategy must be constructed from less information. To describe this mathematically, we introduce an additional filtration $\bH=\{\H_t, \ t\in [0,T]\}$ 
corresponding to the available information level to traders such that
\begin{equation}\label{filtr-cond}
\H_t \subseteq \F_t, \quad \mbox{for  every}\ t \in [0,T].
\end{equation}
We remark that all filtrations satisfy the usual conditions. Similarly to \cite{s94} we make the following assumption.
\begin{ass}\label{ass:misurabilitaS_T}
We assume that at final time $T$, $\hat S_T$ is $\H_T$-measurable.
\end{ass}
We recall that under Assumption \ref{ass:numport}, the benchmarked
value of any nonnegative, self-financing portfolio forms an $\R^{d+1}$-valued $(\bF,\P)$-supermartingale, see \eqref{supermprop}. In
particular, the vector of the $d+1$ benchmarked primary security accounts
$\hat S$ forms with each of its components a nonnegative $(\bF,\P)$-{\it supermartingale}. By Theorem VII.12 of~\cite{dm2}, we know that the
vector process $\hat S$ has a unique decomposition of the form
\begin{equation} \label{supermartdeco}
\hat S_t=\hat S_0+M_t+A_t,\quad t \in [0,T],
\end{equation}
where $M=\{M_t=(M_t^0,M_t^1,\ldots,M_t^d)^\top,\ t \in [0,T]\}$ is an $\R^{d+1}$-valued $(\bF,\P)$-local martingale and $A=\{A_t=(A_t^0,A_t^1,\ldots,A_t^d)^\top,\ t \in [0,T]\}$ is an $\R^{d+1}$-valued $\bF$-predictable, in each component non-decreasing process with $M_0=A_0={\bf 0}$, with ${\bf 0}$ denoting the
$(d+1)$-dimensional null vector. This expresses the fact that every right-continuous $(\bF,\P)$-supermartingale is a special $(\bF,\P)$-semimartingale.\\
If the vector-valued process $\hat S=\{\hat S_t=(\hat S_t^0,\hat S_t^1,\ldots,\hat S_t^d)^\top,\ t \in [0,T]\}$ is continuous it is possible to provide an explicit description of the num\'{e}raire portfolio $S^{\delta_*}$ and the vector of primary security account processes
turns out to be an $\R^{d+1}$-valued $(\bF,\P)$-local martingale when expressed in units of the num\'{e}raire portfolio. On the other hand, if $\hat S$ exhibits jumps it is not possible to find an analogous characterization, see e.g.~\cite{bcp} for further details. However, a wide class of jump-diffusion market models is driven by primary security
account processes that are given by $(\bF,\P)$-local martingales, when expressed in units of the num\'{e}raire portfolio, as we will see for instance in Section \ref{sec:jumpdiff}. Here we discuss the hedging problem of a contingent claim in the partial information setting given by \eqref{filtr-cond} when benchmarked securities are $\R^{d+1}$-valued $(\bF,\P)$-local martingales. This justifies the following assumption.
\begin{ass}
The following holds:
$$
A_t \equiv 0, \quad \forall t \in [0,T]
$$
in \eqref{supermartdeco}.
\end{ass}
In order to deal with economically reasonable investment strategies,
we impose some integrability assumptions.
\begin{definition}\label{thetaH-F}
The space $\Theta(\bH)$ (respectively $\Theta(\bF)$) consists of all $\R^{d+1}$-valued $\bH$-predictable (respectively $\bF$-predictable) processes $\delta$ satisfying
the following integrability condition
\begin{equation}\label{admissible}
\esp{\int_0^T\delta_u^\top\cdot \ud \langle M\rangle_u \cdot \delta_u}<\infty.
\end{equation}
Here $\langle M\rangle=(\langle M^i,M^j\rangle)_{i,j=0,\ldots,d}$ denotes the $(d+1) \times (d+1)$ matrix-valued $\bF$-predictable sharp bracket process of the $\R^{d+1}$-valued $(\bF,\P)$-local martingale $M$.
\end{definition}
Since the sharp bracket of the semimartingale $\hat S$ is equal to the sharp bracket of its (local) martingale part, we use both the notations $\langle \hat S \rangle$ and $\langle M \rangle$.
\begin{definition}
An $\bH$-strategy is a pair $\phi=(\eta, \delta)$, where the process $\delta \in \Theta(\bH)$ describes the number of units invested in the benchmarked security accounts and $\eta=\{\eta_t, \ t \in [0,T]\}$ is an $\R$-valued $\bH$-adapted c\`{a}dl\`{a}g process such that the associated benchmarked portfolio value $\hat V^\phi=\{\hat V_t^\phi,\ t \in [0,T]\}$ is an $\bF$-adapted and  square-integrable process (i.e. $\hat V_t^\phi \in L^2(\F_t,\P)$, for each $t
\in [0,T]$) whose left limit is equal to $\hat
V_{t^-}^\phi=\sum_{j=0}^d \delta_t^j\hat S_{t^-}^j+ \eta_{t^-}$.
\end{definition}
\noindent The strategy component $\eta$ may be interpreted as the number of units invested in the num\'{e}raire portfolio $S^{\delta_*}$, see also Remark \ref{rem:3.8}.
\noindent Recall that the market may be not complete; then, we also admit strategies that are not self-financing and may generate (benchmarked) costs over time.
\begin{definition} For any $\bH$-strategy
$\phi$, the {\em benchmarked cost process} $\hat
C^\phi$ is defined by
\begin{equation}\label{profloss}
\hat C_t^\phi:=\hat V_t^{\phi}-\int_0^t \delta_u^\top \cdot \ud
\hat S_u, \quad \forall t \in [0,T].
\end{equation}
\end{definition}
\noindent Here $\hat C_t^\phi$ describes the total costs incurred over the time interval $[0,t]$.

In order to introduce the quadratic criterion of {\em risk-minimization} under restricted information we have to define the so-called {\em risk process} in such a framework.
\begin{definition} \label{def:risk}
For any $\bH$-strategy $\phi$, the corresponding $\bH$-{\em
risk} $\hat R^{\H,\phi}$ at time $t \in [0,T]$ is defined by
$$
\hat R_t^{\H,\phi}:=\condesph{\left(\hat C_T^\phi-\hat C_t^\phi
\right)^2}.
$$
\end{definition}
Note that the $\bH$-risk $\hat R^{\H,\phi}$ is well-defined, since
the benchmarked cost process $\hat C^\phi$, given
in \eqref{profloss}, is square-integrable. \\
Our goal is to find an $\bH$-strategy
$\phi$ which minimizes the associated $\bH$-risk measured by the
fluctuations of its benchmarked cost process in a suitable sense.

Throughout this paper, we consider a European style contingent claim with maturity $T$ whose benchmarked payoff
is given by an $\H_T$-measurable,
nonnegative random variable $\hat H_T$. We will always assume that a benchmarked contingent claim
$\hat H_T$ belongs to
$L^2(\H_T,\P)$.

For a benchmarked contingent claim $\hat H_T \in L^2(\H_T,\P)$, it makes sense to define some risk-minimizing strategy under restricted information
by
looking for an $\bH$-strategy $\phi=(\eta,\delta)$ such that $\hat V_T^\phi =\hat H_T$ which minimizes the $\bH$-risk process $\hat R^{\H,\phi}$ in the following manner:
\begin{definition} \label{optimalstrategy}
Given a benchmarked contingent claim $\hat H_T \in L^2(\H_T,\P)$,
an  $\bH$-strategy $\phi=(\eta, \delta)$ is said to be {\em
benchmarked 
$\bH$-risk-minimizing}
if the following conditions hold:
\begin{itemize}
\item[(i)] $\hat V^\phi_T=\hat H_T,\ \P$-\mbox{a.s.};
\item[(ii)] for any $t \in [0,T]$ and for any $\bH$-strategy $\tilde \phi$ such that $\hat V_T^{\tilde \phi}=\hat V_T^\phi$ $\P$-\mbox{a.s.}, then
$$
\hat R_t^{\H,\phi} \leq \hat R_t^{\H,\tilde \phi},\quad \P-\mbox{a.s.}.
$$
\end{itemize}
\end{definition}

\begin{remark}\label{rem:3.8}
We observe that for an $\bH$-strategy $\phi=(\eta, \delta)$, at any time $t \in [0,T]$ the benchmarked portfolio value is given by
\[
\hat V^\phi_t= \sum_{j=0}^d \delta^j_t \hat S^j_t + \eta_t =  \sum_{j=0}^d \bar \delta^j_t \hat S^j_t =:\hat S^{\bar \delta}_t
\]
where $\bar \delta^j_t= \delta^j_t + \eta_t \delta^j_{*,t}$ and $\hat S^{\bar \delta}=\{\hat S_t^{\bar \delta},\ t \in [0,T]
\}$ is the benchmarked portfolio value associated to the strategy $\bar \delta:=\{\bar \delta_t=(\bar \delta_t^0,\bar \delta_t^1,\ldots,\bar \delta_t^d)^\top, \ t \in [0,T]\}$. We recall that
\[
\sum_{j=0}^d \delta^j_{*,t} \hat S^j_t= \hat S^{\delta_{*}}_t=1, \quad t \in [0,T],
\]
and $\delta^j_{*,t}$ denotes the number of units of the $j$-th benchmarked security held at time $t$ in the num\'{e}raire portfolio.

The strategies $\phi$ and $\bar \delta$ have the same benchmarked cost process. In fact, for every $t \in [0,T]$, we have
\begin{gather*}
\hat C_t^\phi = \hat V_t^{\phi}-\int_0^t \delta_u^\top \cdot \ud \hat S_u = \hat S_t^{\bar \delta}-\int_0^t \delta_u^\top \cdot \ud \hat S_u \\
=\hat S_t^{\bar \delta}-\int_0^t (\bar \delta_u)^\top \cdot \ud \hat S_u + \int_0^t \eta_u (\delta_{*,u})^\top \cdot \ud \hat S_u  \\
=\hat S_t^{\bar \delta}-\int_0^t (\bar \delta_u)^\top \cdot \ud \hat S_u + \int_0^t \eta_u \ud \hat S^{\delta_{*}}_u\\
=\hat S_t^{\bar \delta}-\int_0^t (\bar \delta_u)^\top \cdot \ud \hat S_u = \hat C_t^{\bar \delta}.
\end{gather*}

\medskip

We also note that the $\R^{d+1}$-valued process $\bar \delta$ is in general $\bF$-adapted, since $\eta$ is $\bH$-adapted and $\delta^{j}_{*}$ is $\bF$-predictable for every $j=0,...,d$.
If $\phi=(\eta, \delta)$ is any benchmarked $\bH$-risk-minimizing strategy and $\delta_*$ is $\bH$-predictable, then $\bar \delta$ is an $\bH$-adapted benchmarked risk-minimizing strategy that only requires to invest in the benchmarked assets $\hat S^0$,...,$\hat S^d$.
\end{remark}

\subsection{The variational formulation of the benchmarked $\bH$-risk-minimizing problem }

To introduce some useful notations, we define $\mathcal M_0^2(\bF,\P)$
as the space of all square-integrable $\R$-valued $(\bF,\P)$-martingales null at time $t=0$.

Let $\hat H_T \in L^2(\H_T,\P)$ and consider the well-known Galtchouk-Kunita-Watanabe decomposition of $\hat H_T$ with respect to $\hat S$ under full information:
\begin{equation} \label{GKWdecomp}
\hat H_T= \hat H_0+\int_0^T (\delta^\F_u)^\top \cdot \ud \hat
S_u+\tilde L_T^{\hat H},\quad \P-{\rm a.s.},
\end{equation}
where $\hat H_0 \in L^2(\F_0,\P) $, $ \delta^\F\in \Theta(\bF)$  and $\tilde L^{\hat H}=\{\tilde L_t^{\hat H}, \ t \in [0,T]\} \in \mathcal M_0^2(\bF,\P)$ is such that $\langle \tilde L^{\hat H}, \hat S^j\rangle_t = 0$, for every $t \in [0,T]$ and $j=0,1,\ldots,d$.

Decomposition \eqref{GKWdecomp} is an essential tool to rewrite the risk-minimizing problem in variational terms. Moreover, the variational formulation will allow us to derive an important feature of the benchmarked cost process associated to the optimal strategy.

We start with a useful lemma which gives us the martingale property of the cost process.

\begin{lemma}
For any $\bH$-strategy $\phi=(\eta, \delta)$ and any $t_0 \in [0,T]$, there exists an $\bH$-strategy $\tilde \phi= (\tilde \eta, \delta)$ such that
\begin{gather*}
\hat V^\phi_T= \hat V^{\tilde \phi}_T \quad \P-a.s.\\
\eta_t=\tilde \eta_t \quad \mbox{ for every }  t<t_0\\
\condesph{\hat C_T^{\tilde \phi}- \hat C_t^{\tilde \phi}}=0     \quad \P-a.s.  \mbox{ for every }  t \geq t_0\\
\end{gather*}
and
\[
R^{\H, \tilde \phi}_t \leq R^{\H, \phi}_t \quad \P-a.s. \mbox{ for every } t \geq t_0
\]
and $\tilde \eta$ can be chosen to satisfy
\[
\tilde \eta_t = \condesph{\hat H_T- \delta^\top_t \hat S_t} \quad \mbox{ for every } t \geq t_0.
\]
\end{lemma}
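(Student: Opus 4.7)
The plan is to build $\tilde\phi = (\tilde\eta,\delta)$ by keeping the risky-asset component $\delta$ untouched and modifying only the numeraire-portfolio component $\eta$ on $[t_0,T]$ so that the benchmarked cost becomes mean-self-financing with respect to $\bH$. The candidate is the càdlàg $\bH$-adapted version of
\[
\tilde\eta_t := \eta_t\I_{\{t<t_0\}} + \condesph{\hat H_T-\delta_t^\top\cdot \hat S_t}\I_{\{t\ge t_0\}}.
\]
Since $\delta$ is $\bH$-predictable, $\delta_t$ is $\H_t$-measurable, so $\tilde\eta_t$ is $\H_t$-measurable; standard regularization of the càdlàg $\bH$-martingale $\condesph{\hat H_T}$ and of the conditional projection $\condesph{\delta_t^\top\cdot\hat S_t}$ delivers a suitable càdlàg representative. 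Because only $\eta$ is changed, the integrability condition $\delta\in\Theta(\bH)$ is preserved and the new portfolio value $\hat V^{\tilde\phi}_t=\delta_t^\top\cdot \hat S_t+\tilde\eta_t$ is $\bF$-adapted and square-integrable as $\hat H_T\in L^2(\H_T,\P)$. The equality $\eta_t=\tilde\eta_t$ on $\{t<t_0\}$ is immediate.

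The crucial observation is that, since $\delta\in\Theta(\bH)\subseteq\Theta(\bF)$ and the integrability condition \eqref{admissible} makes $\int\delta_u^\top\cdot\ud \hat S_u$ a genuine $L^2(\bF,\P)$-martingale, the tower property gives
\[
\condesph{\int_t^T\delta_u^\top\cdot\ud \hat S_u}=\condesph{\condespf{\int_t^T\delta_u^\top\cdot\ud \hat S_u}}=0,\qquad t\ge t_0.
\]
For the terminal condition, apply the definition at $t=T$: by Assumption \ref{ass:misurabilitaS_T} the quantity $\hat S_T$ is $\H_T$-measurable and $\hat H_T\in L^2(\H_T,\P)$, whence $\tilde\eta_T=\hat H_T-\delta_T^\top\cdot\hat S_T$ and therefore $\hat V^{\tilde\phi}_T=\delta_T^\top\cdot\hat S_T+\tilde\eta_T=\hat H_T=\hat V^{\phi}_T$. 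For the mean-self-financing property, rewrite
\[
\hat C^{\tilde\phi}_T-\hat C^{\tilde\phi}_t=\hat H_T-\delta_t^\top\cdot\hat S_t-\tilde\eta_t-\int_t^T\delta_u^\top\cdot\ud \hat S_u,
\]
condition on $\H_t$, use the vanishing of the integral term, and plug in the definition of $\tilde\eta_t$; the result is zero for every $t\ge t_0$.

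The risk inequality then follows by a Pythagorean argument. Subtracting the two cost increments one finds
\[
\hat C^{\phi}_T-\hat C^{\phi}_t=(\eta_t-\tilde\eta_t)+(\hat C^{\tilde\phi}_T-\hat C^{\tilde\phi}_t),\qquad t\ge t_0,
\]
because the $\delta$-part and $\hat H_T$ are the same for both strategies. Since $\eta_t-\tilde\eta_t$ is $\H_t$-measurable and the second summand has, by the previous step, zero $\H_t$-conditional expectation, squaring and taking $\condesph{\,\cdot\,}$ yields
\[
\hat R_t^{\H,\phi}=(\eta_t-\tilde\eta_t)^2+\hat R_t^{\H,\tilde\phi}\ge \hat R_t^{\H,\tilde\phi},\qquad t\ge t_0.
\]
The main obstacle I anticipate is purely technical: producing a version of $\tilde\eta$ that is genuinely càdlàg and $\bH$-adapted (so that $\tilde\phi$ is a bona fide $\bH$-strategy in the sense of the paper), since the map $t\mapsto \condesph{\delta_t^\top\cdot\hat S_t}$ is only implicitly regular and must be handled by passing through càdlàg modifications of the $\bH$-optional projection of $\hat S$.
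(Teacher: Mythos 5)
Your argument is correct and is essentially the proof of Lemma~2.1 in \cite{s94} specialized to $\mathcal G_t=\mathcal G_t'=\H_t$, which is exactly what the paper invokes (it gives no proof beyond that citation): same candidate $\tilde\eta$, same use of the $(\bF,\P)$-martingale property of $\int\delta^\top\cdot\ud\hat S$ plus the tower property, and the same Pythagorean decomposition of the risk. The only caveat is that your terminal-time verification uses the equality $\hat V^{\phi}_T=\hat H_T$, which is not among the stated hypotheses; this tension is already present in the lemma's own statement (the conditions $\hat V^{\phi}_T=\hat V^{\tilde\phi}_T$ and $\tilde\eta_T=\hat H_T-\delta_T^\top\cdot\hat S_T$ are compatible only for strategies delivering the claim), and is harmless since the paper only applies the lemma to such strategies.
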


The proof follows the same lines of the proof of Lemma 2.1 in \cite{s94} with the choice $\mathcal{G}_t= \mathcal{G}_t'=\H_t$ for every $t\in [0,T]$. As a consequence, choosing $t_0=0$, we get that the $\bH$-optional projection of the benchmarked cost process $\hat C^{\tilde \phi}$ is an $\bH$-martingale.

We now introduce the variational formulation of the risk-minimizing problem.

\begin{lemma}\label{lemma:optim}
Let $\hat H_T \in L^2(\H_T,\P)$ be a benchmarked contingent claim and $\phi=(\eta, \delta)$ an $\bH$-strategy that delivers the claim at time $T$ (i.e. $\hat V^{\phi}_T=\hat H_T$). Then,  $\phi$ is benchmarked $\bH$-risk-minimizing if and only if the $\bH$-optional projection of the associated benchmarked cost process $\hat C^{\phi}$ is an  $\bH$-martingale and solves the following optimization problem:
\begin{equation}\label{optim.problem}
\min_{\gamma \in \Theta (\bH)} \esp{\left(\hat H_T - \int_0^T\gamma_u^\top \cdot \ud \hat S_u\right)^2}.
\end{equation}
\end{lemma}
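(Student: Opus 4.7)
The plan is to use Lemma 3.10 in both directions to reduce the bivariate risk minimization over $(\eta,\delta)$ to the univariate optimization over $\delta$ alone. Lemma 3.10 produces, from any $\bH$-strategy $\phi$ delivering $\hat H_T$, a companion strategy with the same integrand $\delta$ and terminal value, whose $\bH$-optional cost projection is an $\bH$-martingale, and with no larger $\bH$-risk; equivalently, it pins down $\eta_t$ as $\tilde\eta_t=\condesph{\hat H_T-\delta_t^\top\hat S_t}$, leaving only the minimization over $\delta$.

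\emph{Necessity.} Suppose $\phi=(\eta,\delta)$ is benchmarked $\bH$-risk-minimizing. Apply Lemma 3.10 with $t_0=0$ to get $\tilde\phi=(\tilde\eta,\delta)$ whose cost optional projection is an $\bH$-martingale. The two strategies share $\delta$ and $\hat H_T$, so $\hat C^{\phi}_T=\hat C^{\tilde\phi}_T$, and $D_t:=\eta_t-\tilde\eta_t=\hat C^{\phi}_t-\hat C^{\tilde\phi}_t$ is $\H_t$-measurable. Expanding $(\hat C^{\tilde\phi}_T-\hat C^{\tilde\phi}_t)^2$ and using the martingale property of the optional projection of $\hat C^{\tilde\phi}$ (which gives $\condesph{\hat C^{\phi}_T-\hat C^{\phi}_t}=-D_t$) yields $\hat R^{\H,\phi}_t-\hat R^{\H,\tilde\phi}_t=D_t^2\ge 0$. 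Optimality of $\phi$ forces $D_t=0$ a.s., so $\phi$ itself has the martingale cost property. For the minimization part, for every $\gamma\in\Theta(\bH)$ put $\phi^\gamma=(\eta^\gamma,\gamma)$ with $\eta^\gamma_t:=\condesph{\hat H_T-\gamma_t^\top\hat S_t}$; by Assumption \ref{ass:misurabilitaS_T}, $\phi^\gamma$ delivers $\hat H_T$ and, by the same computation, has martingale cost projection. For strategies of this form $\hat C^{\phi^\gamma}_T-\hat C^{\phi^\gamma}_0$ depends only on $\gamma$, and combining $\condespho{\int_0^T\gamma_u^\top\cdot\ud\hat S_u}=0$ (since $\int\gamma^\top\cdot\ud\hat S$ is an $(\bF,\P)$-local martingale starting at $0$) with the fact that $\esp{\hat H_T-\int_0^T\gamma_u^\top\cdot\ud\hat S_u}=\esp{\hat H_T}$ is constant in $\gamma$, the $\bH$-risk $\hat R^{\H,\phi^\gamma}_0$ coincides, up to an additive constant independent of $\gamma$, with the $L^2$ objective of \eqref{optim.problem}. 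Optimality of $\phi$ then forces $\delta$ to be a minimizer in \eqref{optim.problem}.

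\emph{Sufficiency.} Assume $\phi$ has $\bH$-martingale cost projection and $\delta$ minimizes \eqref{optim.problem}. For any competing $\bH$-strategy $\tilde\phi$ with $\hat V_T^{\tilde\phi}=\hat H_T$, Lemma 3.10 produces $\tilde{\tilde\phi}$ with the same integrand, martingale cost projection, and $\hat R^{\H,\tilde{\tilde\phi}}_t\le\hat R^{\H,\tilde\phi}_t$. Both $\phi$ and $\tilde{\tilde\phi}$ fall in the class analyzed in the previous paragraph, so their risks at $t=0$ admit the representation above in terms of the respective integrand alone; minimality of $\delta$ yields $\hat R^{\H,\phi}_0\le \hat R^{\H,\tilde{\tilde\phi}}_0\le \hat R^{\H,\tilde\phi}_0$. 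The analogous inequality at a general $t\in[0,T]$ follows by the same reasoning applied after invoking Lemma 3.10 with $t_0=t$, which aligns the two strategies on $[0,t)$ and endows the relevant cost increments with the $\bH$-martingale structure.

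The principal technical obstacle is the bookkeeping in the first direction: verifying that, once $\eta$ is tied to $\delta$ through the martingale cost property, the $\bH$-risk reduces up to $\delta$-independent constants to the $L^2$ functional in \eqref{optim.problem}. This rests on the $(\bF,\P)$-local martingale property of $\int\gamma^\top\cdot\ud\hat S$, on the tower property through $\H_t\subseteq\F_t$, and on Assumption \ref{ass:misurabilitaS_T}, which guarantees that the candidate $\eta^\gamma$ actually produces a strategy delivering $\hat H_T$ at maturity.
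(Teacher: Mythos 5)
Your overall architecture is reasonable and close in spirit to the route the paper takes (it simply invokes Proposition 2.3 of~\cite{s94}): you use the preceding lemma to pass to companion strategies whose cost projection is an $\bH$-martingale, and the first half of your necessity argument is correct — the identity $\hat R^{\H,\phi}_t=\hat R^{\H,\tilde\phi}_t+D_t^2$ with $D_t$ $\H_t$-measurable does force $D_t=0$ and hence the martingale property of the cost projection. The genuine gap is in the sufficiency direction (and, more mildly, already in your $t=0$ bookkeeping). The objective in \eqref{optim.problem} is a single \emph{unconditional} expectation, whereas Definition \ref{optimalstrategy} requires the \emph{$\P$-a.s.} inequality $\hat R^{\H,\phi}_t\le\hat R^{\H,\tilde\phi}_t$ of $\H_t$-conditional expectations for \emph{every} $t$. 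Your claim that ``$\hat R^{\H,\phi^\gamma}_0$ coincides, up to an additive constant independent of $\gamma$, with the $L^2$ objective'' conflates a conditional expectation with an unconditional one; this is harmless for necessity (an a.s.\ inequality implies one in mean), but it points in the wrong direction for sufficiency: even at $t=0$, and a fortiori at $t>0$, minimality of $\delta$ in \eqref{optim.problem} does not by itself yield that $\condesph{(\hat C^\phi_T-\hat C^\phi_t)^2}$ is a.s.\ minimal among competitors. The sentence ``the analogous inequality at a general $t$ follows by the same reasoning'' is precisely where the content of the lemma lies, and the reasoning does not transfer.

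The missing ingredient is the localization/conditioning step on which the paper's proof rests (Lemma 5.4 of~\cite{ccr}, used in Remark \ref{rem:orthogonality}, or equivalently the argument inside Schweizer's Proposition 2.3): by the projection theorem, optimality of $\delta$ in \eqref{optim.problem} is equivalent to $\esp{\bigl(\hat H_T-\int_0^T\delta_u^\top\cdot\ud\hat S_u\bigr)\int_0^T\gamma_u^\top\cdot\ud\hat S_u}=0$ for all $\gamma\in\Theta(\bH)$; testing this with integrands of the form $\I_B\,\I_{(t,T]}(s)\,\gamma_s$ with $B\in\H_t$, which are again in $\Theta(\bH)$, upgrades the unconditional orthogonality to the conditional one, $\condesph{\bigl(\hat C^\phi_T-\hat C^\phi_t\bigr)\int_t^T\gamma_u^\top\cdot\ud\hat S_u}=0$ $\P$-a.s.\ for every $t$. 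Only with this conditional orthogonality can you expand $\hat R^{\H,\tilde{\tilde\phi}}_t-\hat R^{\H,\phi}_t$ into a nonnegative square plus a vanishing cross term and obtain the a.s.\ inequality at every $t$. You should either prove this conditional orthogonality or invoke it explicitly; as written, the equivalence between the pointwise-in-$t$ risk minimality and the single variational problem \eqref{optim.problem} is asserted rather than established.
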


The proof of this lemma follows by Proposition 2.3 in \cite{s94}.

\begin{remark}\label{rem:orthogonality}
Note that, given a benchmarked $\bH$-risk-minimizing strategy $\phi = (\eta, \delta)$, the associated residual benchmarked cost process $\hat C_T^\phi- \hat C_t^\phi$ at time $t \in [0,T]$ satisfies the following condition
$$
\condesph{\left(\hat C_T^\phi- \hat C_t^\phi\right)\int_t^T \gamma_u^\top \cdot \ud \hat S_u}=0, \quad \P-a.s.
$$
for every $t \in [0,T]$ and for each $\gamma \in \Theta(\bH)$.

Indeed, let $\hat H_T \in L^2(\H_T,\P)$ be a benchmarked contingent claim. If an $\bH$-strategy $\phi = (\eta, \delta)$ is benchmarked $\bH$-risk-minimizing, then by Lemma \ref{lemma:optim}, we have that the second component of $\phi$ solves the following optimization problem:
\begin{equation*}
\min_{\gamma \in \Theta (\bH)} \esp{\left(\hat H_T - \int_0^T\gamma_u^\top \cdot\ud \hat S_u\right)^2}.
\end{equation*}
In particular, thanks to the projection theorem, this is equivalent to
\begin{gather*}\label{eq:proj_thm}
\esp{\left(\hat V_T^\phi - \int_0^T\delta_u^\top \cdot \ud \hat S_u\right) \int_0^T\gamma_u^\top \cdot \ud \hat S_u}  = 0,
\end{gather*}
for every $\gamma \in \Theta (\bH)$
and by \eqref{profloss} also to
\begin{equation*}\label{eq:ortogonalitaC}
\esp{\hat C_T^\phi \int_0^T\gamma_u^\top \cdot \ud \hat S_u}  = 0
\end{equation*}
for every $\gamma \in \Theta (\bH)$.
Finally, by Lemma 5.4 of~\cite{ccr}, this is also equivalent to
$$
\condesph{\left(\hat C_T^\phi- \hat C_t^\phi\right)\int_t^T \gamma_u^\top \cdot \ud \hat S_u}=0, \quad \P-a.s.
$$
for every $t \in [0,T]$ and every $\gamma \in \Theta(\bH)$.
 Note that we could apply
Lemma 5.4 of~\cite{ccr} even if we do not have the martingale property of the process $\hat C^\phi$.
\end{remark}

\subsection{Existence of the benchmarked $\bH$-risk-minimizing strategy}

In this section we wish to characterize explicitly the benchmarked $\bH$-risk minimizing strategy 
in terms of the integrand $\delta^\F$ appearing in decomposition \eqref{GKWdecomp}.

We start with a generalization of Proposition 4.3 in~\cite{ccr}.

Let $G=\{G_t=(G_t^0,G_t^1,\ldots,G_t^d), \ t\in [0,T]\}$ be an $\R^{d+1}$-valued $\bF$-adapted c\`{a}dl\`{a}g process of integrable variation. We denote by $\|G\|=\{\|G\|_t=(\|G^0\|_t,\ldots,\|G^{d}\|_t),\ t \in [0,T]\}$ the total variation of the function $t \to G_t(\omega)$ defined by
\[
\|G^j\|_t(\omega)=\sup_{\Delta}\sum_{i=0}^{n(\Delta)-1}|G^j_{t_i+1}(\omega)-G^j_{t_i}(\omega)|, \quad j \in \{0,1,\ldots,d\}
\]
where $\Delta:=\{t_0=0<t_1<...<t_n=t\}$ is a partition of the interval $[0,t]$.

\begin{proposition}\label{prop:dualproj}
Let $G$ be an $\R^{d+1}$-valued c\`{a}dl\`{a}g $\bF$-adapted  process of integrable variation. Then, there exists a unique $\R^{d+1}$-valued $\bH$-predictable process $G^\bH=\{G_t^\bH,\ t\in [0,T] \}$ of integrable variation such that
\[
\esp{\int_0^T \theta_t^\top \cdot \ud G_t }=\esp{\int_0^T \theta_t^\top \cdot \ud G_t^\bH }
\]
for every $\R^{d+1}$-valued $\bH$-predictable (bounded) process $\theta$. The process $G^\bH$ is called the $\bH$-predictable dual projection of $G$.
\end{proposition}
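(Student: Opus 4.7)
The plan is to reduce the statement to the classical one-dimensional dual predictable projection theorem, applied component by component, and then to glue the pieces back together. Since the $\bH$-predictability of $G^\bH$ and the integrability of its total variation must be read componentwise, and since the integral $\int \theta^\top \cdot \ud G = \sum_{j=0}^d \int \theta^j \ud G^j$ splits additively, we may treat each coordinate in isolation.

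First I would fix $j \in \{0,1,\ldots,d\}$ and consider the scalar $\bF$-adapted càdlàg process $G^j$ of integrable variation, i.e.\ $\esp{\|G^j\|_T}<\infty$. By the classical theorem on the existence and uniqueness of the dual predictable projection in a given filtration (see e.g.\ Dellacherie-Meyer, Theorem VI.73, or the analogous statement in \cite{dm2}), there exists a unique $\bH$-predictable process $G^{j,\bH}$ of integrable variation, null at $0$, such that
\[
\esp{\int_0^T \theta^j_t \, \ud G^j_t} = \esp{\int_0^T \theta^j_t \, \ud G^{j,\bH}_t}
\]
for every nonnegative $\bH$-predictable process $\theta^j$. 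The extension to bounded $\bH$-predictable integrands $\theta^j$ follows by writing $\theta^j = (\theta^j)^+ - (\theta^j)^-$ and using linearity, integrability of $\|G^j\|_T$ and $\|G^{j,\bH}\|_T$ together with the boundedness of $\theta^j$ to justify the splitting.

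Next I would assemble the vector-valued process $G^\bH := (G^{0,\bH},\ldots,G^{d,\bH})^\top$; it is $\R^{d+1}$-valued $\bH$-predictable and has integrable variation in each component, hence is of integrable variation in the sense of the statement. For any $\R^{d+1}$-valued bounded $\bH$-predictable process $\theta = (\theta^0,\ldots,\theta^d)^\top$, linearity of integration gives
\[
\esp{\int_0^T \theta_t^\top \cdot \ud G_t} = \sum_{j=0}^d \esp{\int_0^T \theta^j_t \, \ud G^j_t} = \sum_{j=0}^d \esp{\int_0^T \theta^j_t \, \ud G^{j,\bH}_t} = \esp{\int_0^T \theta_t^\top \cdot \ud G_t^\bH},
\]
which is the required identity.

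For uniqueness, I would assume $\widetilde G^\bH$ is another $\R^{d+1}$-valued $\bH$-predictable process of integrable variation satisfying the same identity. Testing coordinate by coordinate against $\theta$ of the form $\theta^j = \I_A \I_{(s,u]}$ with $A \in \H_s$, which are bounded $\bH$-predictable, and using the fact that such processes generate the $\bH$-predictable $\sigma$-field via a monotone class argument, forces $G^{j,\bH} - \widetilde G^{j,\bH}$ to have zero $\bH$-predictable variation, hence to vanish up to indistinguishability. I expect the main obstacle (modest in nature) to lie in correctly invoking the scalar dual predictable projection theorem in the generality needed here (càdlàg, integrable-variation, not necessarily monotone, with the right filtration inclusion \eqref{filtr-cond}) and in checking that the extension from nonnegative to bounded $\bH$-predictable integrands is justified; the vectorial step itself is purely bookkeeping.
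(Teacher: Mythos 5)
Your proposal is correct and follows essentially the same route as the paper: the paper's proof likewise splits the expectation $\esp{\int_0^T \theta_t^\top \cdot \ud G_t}$ into the sum $\sum_{j=0}^{d}\esp{\int_0^T \theta^j_t \,\ud G^j_t}$, applies the scalar dual predictable projection result (Proposition 4.3 of \cite{ccr}, the partial-information analogue of the classical Dellacherie--Meyer statement you cite) to each component, and reassembles. Your additional remarks on uniqueness and on extending from nonnegative to bounded integrands are sound but not needed beyond what the scalar result already provides.
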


\begin{proof}
First, we can observe that
\begin{equation}\label{eq:esp1}
\esp{\int_0^T \theta_t^\top \cdot \ud G_t }=\esp{\sum_{j=0}^{d}\int_0^T \theta^j_t \ud G^j_t}=\sum_{j=0}^{d}\esp{\int_0^T \theta^j_t \ud G^j_t}.
\end{equation}
Then, for every $j=0,...,d$ we can apply Proposition 4.3 in \cite{ccr} so that \eqref{eq:esp1} becomes
\[
\sum_{j=0}^{d}\esp{\int_0^T \theta^j_t \ud G^j_t}=\sum_{j=0}^{d}\esp{\int_0^T \theta^j_t \ud (G^{j}_t)^\bH}=\esp{\sum_{j=0}^{d}\int_0^T \theta^j_t \ud (G^{j}_t)^\bH}=\esp{\int_0^T \theta_t^\top \cdot \ud G^\bH_t },
\]
which yields the result.
\end{proof}

Similarly to~\cite{s94}, we fix an increasing $\bF$-predictable c\`{a}dl\`{a}g process $B$ null at initial time such that $\langle \hat S^i, \hat S^j \rangle=\langle M^i, M^j\rangle$ is absolutely continuous with respect to $B $ for every $i,j=0,...,d$ and we write $\langle \hat S^i, \hat S^j \rangle<<B$.
Then, we define the $\bF$-predictable $(d+1) \times (d+1)$ matrix-valued process $\sigma$ by setting, for every $t \in [0,T]$
\begin{equation} \label{eq:sigma}
\sigma^{i,j}_t:=\frac{\ud \langle \hat S^i, \hat S^j\rangle_t}{\ud B_t}\quad \mbox{for} \ i,j=0,1,\ldots,d.
\end{equation}
Then, the space $\Theta(\bH)$ can be rewritten as the set of all $\R^{d+1}$-valued $\bH$-predictable processes $\delta$ such that
\[
\esp{\int_0^T\delta_t^\top \cdot \sigma_t \cdot \delta_t \ \ud B_t}<\infty.
\]
\begin{remark}\label{rem:abs_continuity}
Let $Z$ and $\tilde Z$ be locally integrable c\`{a}dl\`{a}g processes of finite variation. If $Z<<\tilde Z$ then also $Z^\bH<<\tilde Z^\bH$.
\end{remark}
Hence, we can define the $\bH$-predictable $(d+1) \times (d+1)$ matrix-valued process $\varrho$ by setting for each $t \in [0,T]$
\begin{equation*}\label{eq:rho}
\varrho_t^{i,j}:=\frac{\ud \left(\int \sigma^{i,j}_u \ud B_u\right)_t^\bH}{\ud B^\bH_t} \quad \mbox{for} \ i,j=0,1,\ldots,d.
\end{equation*}
Indeed, we can easily check that
since $\langle \hat S^i, \hat S^j\rangle << B$  for $i,j=0,1,\ldots,d$, then $\langle \hat S^i, \hat S^j\rangle^\bH<< B^\bH$ for $i=0,1,\ldots,d$ (Remark \ref{rem:abs_continuity}). Therefore, for each $t \in [0,T]$ we can set
\[
\varrho^{i,j}_t=\frac{\ud \langle \hat S^i, \hat S^j\rangle^\bH_t}{\ud B^\bH_t}\quad \mbox{for} \ i,j=0,1,\ldots,d.
\]
Finally, note that by \eqref{eq:sigma} we have that
$\langle \hat S^i, \hat S^j\rangle^\bH=\left(\int \sigma_u^{i,j}\ud B_u\right)^\bH$ for $i,j=0,1,\ldots,d$.

\begin{lemma}\label{lemma:uniq_var_problem}
For every benchmarked contingent claim $\hat H_T \in L^2(\H_T,\P)$ the variational problem admits a unique solution $\delta\in \Theta(\bH)$.
\end{lemma}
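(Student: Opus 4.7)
The plan is to recast the variational problem as an orthogonal projection in a Hilbert space. First, I would endow $\Theta(\bH)$, modulo processes whose $L^2(\ud \langle M\rangle \otimes \ud\P)$-norm vanishes, with the inner product
\[
\langle \gamma, \gamma'\rangle_{\Theta(\bH)} := \esp{\int_0^T \gamma_u^\top \cdot \ud\langle M\rangle_u \cdot \gamma'_u},
\]
which is finite on $\Theta(\bH)$ by \eqref{admissible}. Using \eqref{eq:sigma} this equals $\esp{\int_0^T \gamma_u^\top \sigma_u \gamma_u\, \ud B_u}$. Since $\gamma$ is $\bH$-predictable and $\gamma^\top \sigma \gamma \ge 0$, Proposition \ref{prop:dualproj} together with Remark \ref{rem:abs_continuity} and the definition of $\varrho$ allow me to rewrite the norm as
\[
\|\gamma\|_{\Theta(\bH)}^2 = \esp{\int_0^T \gamma_u^\top \cdot \varrho_u \cdot \gamma_u\, \ud B^\bH_u}.
\]
This is the key reformulation: both $\varrho$ and $B^\bH$ are $\bH$-predictable, so the space $\Theta(\bH)$ is isometrically embedded into the classical Lebesgue space $L^2(\Omega\times[0,T],\, \mathcal{P}(\bH),\, \varrho\, \ud B^\bH \otimes \ud \P)$ of $\bH$-predictable vector-valued processes, which is a Hilbert space. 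A standard completeness argument then shows that $\Theta(\bH)$ itself is a Hilbert space.

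Next, I would consider the linear map $J : \Theta(\bH) \to L^2(\F_T, \P)$ given by $J(\gamma) := \int_0^T \gamma_u^\top \cdot \ud \hat S_u$. Because $\bH$-predictable processes are also $\bF$-predictable and $\hat S$ is an $\R^{d+1}$-valued $(\bF,\P)$-local martingale (by the assumption that $A\equiv 0$ in \eqref{supermartdeco}), the stochastic integral is a well-defined square-integrable $(\bF,\P)$-martingale and the classical It\^o isometry yields
\[
\esp{\bigl(J(\gamma)\bigr)^2} = \esp{\int_0^T \gamma_u^\top \cdot \ud \langle M\rangle_u \cdot \gamma_u} = \|\gamma\|_{\Theta(\bH)}^2.
\]
Hence $J$ is a linear isometry, and its image $\mathcal{I}(\bH) := J(\Theta(\bH))$ is therefore a closed linear subspace of $L^2(\F_T,\P)$.

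Finally, the variational problem \eqref{optim.problem} can be read as the best approximation problem in $L^2(\F_T,\P)$ of $\hat H_T$ by elements of $\mathcal{I}(\bH)$. The Hilbert space projection theorem then supplies a unique element $Y^* \in \mathcal{I}(\bH)$ realizing the minimum, and the isometry $J$ transfers uniqueness back: there exists a unique $\delta \in \Theta(\bH)$ (modulo the null equivalence defining the Hilbert norm) with $J(\delta) = Y^*$, which solves \eqref{optim.problem}.

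The main obstacle is the completeness argument for $\Theta(\bH)$: one must move from the $\bF$-predictable data $(\sigma, B)$ appearing in the natural norm to the $\bH$-predictable data $(\varrho, B^\bH)$ before classical $L^2$ completeness can be invoked, and this is precisely why the predictable dual projection machinery (Proposition \ref{prop:dualproj} and the absolute continuity in Remark \ref{rem:abs_continuity}) was developed beforehand.
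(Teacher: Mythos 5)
Your argument is correct and follows essentially the same route as the paper: both proofs hinge on the isometry $\gamma \mapsto \int_0^T \gamma_u^\top \cdot \ud \hat S_u$ together with the rewriting $\esp{\int_0^T \gamma_u^\top \cdot \sigma_u \cdot \gamma_u \, \ud B_u} = \esp{\int_0^T \gamma_u^\top \cdot \varrho_u \cdot \gamma_u \, \ud B^\bH_u}$ via Proposition \ref{prop:dualproj}, which expresses the norm through the $\bH$-predictable data $(\varrho, B^\bH)$ and yields closedness of $\Theta(\bH)$ (equivalently, its completeness), after which the projection theorem gives existence and uniqueness. The only cosmetic difference is that you phrase the key step as completeness of $\Theta(\bH)$ as a Hilbert space in its own right, whereas the paper phrases it as closedness of $\Theta(\bH)$ inside $\Theta(\bF)$.
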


\begin{proof}
The proof follows the same lines of that of Lemma 2.4 in \cite{s94}. Since the integral $\int_0^T \gamma_u^\top \cdot \ud \hat S_u$
is an isometry from $\Theta(\bF)$ to $L^2(\H_T , \P)$, it is sufficient to prove that $\Theta(\bH)$ is a closed subspace of $\Theta(\bF)$.
In fact by \eqref{eq:rho}, for every $\gamma \in \Theta (\bH)$ we get
\begin{align*}
&\esp{\int_0^T\gamma_u^\top \cdot \sigma_u \cdot \gamma_u \ud B_u}=\sum_{i,j=0}^d\esp{\int_0^T \gamma_u^i \gamma_u^j \ud \left(\int \sigma^{i,j}\ud B\right)_u}=\sum_{i,j=0}^d\esp{\int_0^T \gamma_u^i \gamma_u^j \ud \left(\int \sigma^{i,j}\ud B\right)^\bH_u}\\
&=\sum_{i,j=0}^d\esp{\int_0^T \gamma_u^i \gamma_u^j \ud \left(\int \varrho^{i,j}\ud B^\bH\right)_u}=\esp{\int_0^T\gamma_u^\top \cdot \varrho_u \cdot \gamma_u \ud B^\bH_u}.
\end{align*}
Since $\varrho$ and $B^\bH$ are both $\bH$-predictable then we get that $\Theta(\bH)$ is closed in $\Theta(\bF)$.
\end{proof}

The following result extends Theorem 2.5 in~\cite{s94}
to the benchmarked framework and provides an explicit representation for the benchmarked $\bH$-risk-minimizing strategy in terms of the integrand appearing in the classical Galtchouk-Kunita-Watanabe decomposition.
\begin{proposition} \label{explicit_delta}
For any $\hat H_T \in L^2(\H_T, \P)$ there exists a unique benchmarked $\bH$-risk-minimizing strategy $\phi^\H=(\eta^\H, \delta^\H)$ that is given by
\begin{gather}
\delta_t^\H=\varrho^{-1}_t\cdot \frac{\ud \left(\int \sigma_u \cdot \delta_u^\F \ud B_u\right)^\bH_t}{\ud B^\bH_t}\label{explicit_deltaH},\\
\eta_t^\H=\condesph{\hat H_T-(\delta^\H_t)^\top \cdot \hat S_t}\nonumber
\end{gather}
for every $t \in [0,T]$, where $\varrho^{-1}$ is the pseudo-inverse of the matrix valued process $\varrho$ and $\delta^\F$ is the process given in \eqref{GKWdecomp}.
\end{proposition}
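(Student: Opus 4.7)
The plan is to exploit the variational characterization from Lemma \ref{lemma:optim}: an $\bH$-strategy is benchmarked $\bH$-risk-minimizing if and only if its second component $\delta$ solves the minimization problem \eqref{optim.problem} and the $\bH$-optional projection of the cost process is an $\bH$-martingale. Existence and uniqueness of the minimizer in $\Theta(\bH)$ are granted by Lemma \ref{lemma:uniq_var_problem}, so the task reduces to deriving the explicit formulas.

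For $\delta^\H$, I would invoke the projection theorem in $L^2$: a process $\delta^\H \in \Theta(\bH)$ solves \eqref{optim.problem} if and only if
\[
\esp{\left(\hat H_T - \int_0^T (\delta^\H_u)^\top \cdot \ud \hat S_u\right)\int_0^T \gamma_u^\top \cdot \ud \hat S_u} = 0
\]
for every $\gamma \in \Theta(\bH)$. Substituting the classical Galtchouk-Kunita-Watanabe decomposition \eqref{GKWdecomp} and using that $\langle \tilde L^{\hat H}, \hat S^j \rangle = 0$ for every $j$ kills the contribution of $\tilde L^{\hat H}$, so the orthogonality condition becomes
\[
\esp{\int_0^T \gamma_u^\top \cdot \sigma_u \cdot (\delta^\F_u - \delta^\H_u)\, \ud B_u} = 0, \quad \forall\, \gamma \in \Theta(\bH),
\]
after an Itô isometry type computation using \eqref{eq:sigma}.

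Now I would pass to the $\bH$-predictable world via Proposition \ref{prop:dualproj}. Since $\gamma$ is $\bH$-predictable, the integrator $\int \sigma_u \cdot \delta^\F_u\, \ud B_u$ can be replaced by its $\bH$-predictable dual projection, and the integral $\int \sigma_u \cdot \delta^\H_u\, \ud B_u$ may be rewritten as $\int \varrho_u \cdot \delta^\H_u\, \ud B^\bH_u$ by the same argument already performed in the proof of Lemma \ref{lemma:uniq_var_problem} (this uses Remark \ref{rem:abs_continuity} to guarantee $\langle \hat S^i, \hat S^j\rangle^\bH \ll B^\bH$). The orthogonality then reads
\[
\esp{\int_0^T \gamma_u^\top \cdot \left(\frac{\ud(\int \sigma_v \cdot \delta^\F_v\, \ud B_v)^\bH_u}{\ud B^\bH_u} - \varrho_u \cdot \delta^\H_u\right) \ud B^\bH_u} = 0
\]
for every $\gamma \in \Theta(\bH)$. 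By arbitrariness of $\gamma$ the bracketed integrand must vanish $\ud B^\bH \otimes \ud\P$-a.e., and inverting with the pseudo-inverse $\varrho^{-1}$ yields \eqref{explicit_deltaH}.

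For $\eta^\H$, I would use the lemma preceding Lemma \ref{lemma:optim}: given the already identified $\delta^\H$, we may replace $\eta$ by $\tilde\eta_t = \condesph{\hat H_T - (\delta^\H_t)^\top \cdot \hat S_t}$ without increasing the $\bH$-risk and while preserving the terminal constraint $\hat V^\phi_T = \hat H_T$; this choice makes the $\bH$-optional projection of the cost into an $\bH$-martingale and, combined with the optimality of $\delta^\H$, must therefore coincide with the second component of any benchmarked $\bH$-risk-minimizing strategy. Uniqueness of the pair $(\eta^\H, \delta^\H)$ follows from Lemma \ref{lemma:uniq_var_problem} together with the fact that $\eta$ is determined by $\delta$ via the above conditional expectation. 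The technical step I expect to be the most delicate is the passage from the $\bF$-adapted orthogonality condition to the $\bH$-predictable identity via dual projections, in particular ensuring the applicability of Proposition \ref{prop:dualproj} to the process $\int \sigma_u \cdot \delta^\F_u\, \ud B_u$ (which requires finite variation/integrability properties inherited from the assumptions on $\delta^\F \in \Theta(\bF)$) and justifying that the resulting Radon-Nikodym derivative with respect to $B^\bH$ is well-defined $\bH$-predictable.
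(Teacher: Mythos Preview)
Your proposal is correct and follows essentially the same approach as the paper: existence and uniqueness via Lemmas \ref{lemma:optim} and \ref{lemma:uniq_var_problem}, the projection-theorem orthogonality condition, substitution of the full-information GKW decomposition \eqref{GKWdecomp}, and passage to $\bH$-predictable dual projections via Proposition \ref{prop:dualproj} to identify $\delta^\H$. Your treatment of $\eta^\H$ via the preceding lemma is slightly more explicit than the paper's, which simply records the formula, but the argument is the same.
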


\begin{proof}
Existence and uniqueness of $\phi^\H$ are ensured by Lemma \ref{lemma:optim} and Lemma \ref{lemma:uniq_var_problem}.
To compute $\delta^\H$ we observe that by the projection theorem a process $\delta \in \Theta(\bH)$ solves the optimization problem \eqref{optim.problem} if and only if
\begin{equation}\label{eq:proj}
\esp{\left(\hat H_T - \int_0^T\delta_u^\top \cdot \ud \hat S_u\right)\int_0^T\gamma_u^\top \cdot \ud \hat S_u}  = 0, \quad {\rm for\ every}\ \gamma \in \Theta (\bH).
\end{equation}
Since every benchmarked claim $\hat H_T$ admits the Galtchouk-Kunita-Watanabe decomposition with respect to $\hat S$ under complete information, see  \eqref{GKWdecomp}, equation \eqref{eq:proj} can be rewritten as
\begin{equation}\label{eq:proj2}
\esp{\int_0^T\left(\delta^\F_u-\delta_u\right)^\top \cdot \ud \hat S_u \int_0^T\gamma_u^\top \cdot \ud \hat S_u}  = 0, \quad \mbox{for\ every}\ \gamma \in \Theta (\bH).
\end{equation}
Then the strategy $\delta^\H$ is determined by condition
$$
0=\esp{\left(\hat H_T - \int_0^T (\delta^\H_u)^\top \cdot \ud \hat S_u\right)\int_0^T\gamma_u^\top \cdot \ud \hat S_u}  =  \esp{\left(\int_0^T(\delta^\F_u - \delta^\H_u)^\top \cdot \ud \hat S_u\right)\int_0^T \gamma_u^\top \cdot \ud \hat S_u}
$$
for\ every $\gamma \in \Theta (\bH)$,
which means that $\delta^\H$ has to satisfy
\begin{equation}\label{eq:tecnica}
\esp{\int_0^T(\delta^\F_u)^\top \cdot \ud \hat S_u\int_0^T \gamma_u^\top \cdot \ud \hat S_u}=\esp{\int_0^T(\delta^\H_u)^\top \cdot \ud \hat S_u \int_0^T \gamma_u^\top \cdot \ud \hat S_u}
\end{equation}
for\ every $\gamma \in \Theta (\bH)$. Note that the expectation on the left-hand side can be rewritten as
\begin{align*}
&\esp{\int_0^T (\delta^\F_u)^\top \cdot \ud \hat S_u \int_0^T \gamma_u^\top \cdot \ud \hat S_u}=\esp{\int_0^T  \gamma_u^\top \cdot \sigma_u \cdot \delta^\F_u \ud B_u}\\
&\quad =\esp{\int_0^T  \gamma_u^\top \cdot \ud \left(\int \sigma_r \cdot \delta_r^\F \ud B_r \right)_u}=\esp{\int_0^T  \gamma_u^\top \cdot \ud \left(\int \sigma_r \cdot \delta_r^\F \ud B_r \right)^\bH_u},
\end{align*}
where the last equality follows by Proposition \ref{prop:dualproj}. Moreover, the term on the right-hand side is given by
\begin{align*}
&\esp{\int_0^T(\delta^\H_u)^\top \cdot \ud \hat S_u \int_0^T \gamma_u^\top \cdot \ud \hat S_u}=\esp{\int_0^T   \gamma_u^\top \cdot \sigma_u \cdot \delta^\H_u \ud B_u}\\
&\quad =\sum_{i,j=0}^d \esp{\int_0^T  \gamma_u^i (\delta_u^\H)^j \ \sigma^{i,j}_u \ \ud B_u }=\sum_{i,j=0}^d \esp{\int_0^T  \gamma_u^i  (\delta_u^\H)^j \ud \left(\int \sigma_r^{i,j} \ud B_r \right)_u}\\
&\quad =\sum_{i,j=0}^d \esp{\int_0^T  \gamma_u^i (\delta_u^\H)^j  \ud \left(\int \sigma_r^{i,j} \ud B_r \right)^\bH_u}=\sum_{i,j=0}^d \esp{\int_0^T  \gamma_u^i (\delta_u^\H)^j \ \varrho^{i,j}_u \ \ud B^\bH_u }\\
&\quad = \esp{\int_0^T   \gamma_u^\top \cdot \varrho_u \cdot \delta^\H_u \ \ud B^\bH_u},
\end{align*}
where we used again Proposition \ref{prop:dualproj} and the definition of the $(d+1) \times (d+1)$ matrix-valued process $\varrho$.
Then, equality \eqref{eq:tecnica} becomes
\[
\esp{\int_0^T  \gamma_u^\top \cdot \ud \left(\int \sigma_r \cdot \delta_r^\F \ud B_r \right)^\bH_u}= \esp{\int_0^T  \gamma_u^\top \cdot \varrho_u \cdot \delta^\H_u \ \ud B^\bH_u}
\]
for\ every $\gamma \in \Theta (\bH)$, which leads to
\[
\varrho_t \cdot \delta^\H_t =\frac{\ud \left(\int \sigma_r \cdot \delta^\F_r \ud B_r \right)^\bH_t}{\ud B^\bH_t}, \quad \mbox{for every }\ t \in [0,T]
\]
and finally to
\[
\delta^\H_t =\varrho_t^{-1}\cdot \frac{\ud \left(\int \sigma_r \cdot \delta_r^\F \ud B_r \right)^\bH_t}{\ud B^\bH_t},\quad \forall t \in [0,T].
\]
\end{proof}

The proposition above provides a practical method to characterize the process $\delta^\H $ in terms of $\delta^\F$.
We now show an example.

\begin{example}\label{esempio1}
Suppose that each component of $\langle \hat S\rangle$ is absolutely continuous with respect to the Lebesgue measure, i.e.
\[
 \ud \langle \hat S^i, \hat S^j \rangle_t =a^{i,j}(t) \ud t, \quad \mbox{for}\ i,j=0,1,\ldots,d
\]
for each $t \in [0,T]$. Here $a^{i,j}=\{a^{i,j}(t),\ t\in [0,T]\}$ are $\bF$-predictable processes representing the $(d+1) \times (d+1)$ components of some matrix-valued process $a$. 
According to \eqref{eq:sigma}, we can choose $B_t = t$, for every $t \in [0,T]$ and then the $(d+1) \times (d+1)$ matrix-valued process $\sigma$ coincides with $a$ and in particular $\varrho$ is given by
\[
\varrho^{i,j}_t={}^p\left( a^{i,j}(t) \right), \quad \mbox{for}\ i,j=0,1,\ldots,d
\]
for each $t \in[0,T]$, where the notation ${}^pZ$ represents the $\bH$-predictable projection of an integrable process $Z$, i.e. ${}^pZ_t:=\esp{Z_t|\H_{t^-}}$ for every $t \in [0,T]$. Assume that the the $(d+1) \times (d+1)$ matrix-valued process $\varrho_t = {}^p a(t)$ is invertible. Then, thanks to \eqref{explicit_deltaH}, the strategy component $\delta^\H$ is given by
\begin{equation}\label{eq:deltah}
\delta^\H_t=\varrho^{-1}_t \cdot {}^p\left(\sigma_t \cdot \delta_t^\F \right)= ({}^p a(t))^{-1} \cdot \ {}^p\left(a(t) \cdot \delta_t^\F\right), \quad \forall t \in [0,T].
\end{equation}

\end{example}

\subsection{The $\bH$-benchmarked risk-minimizing strategy and the Galtchouk-Kunita-Watanabe decomposition}

In this section we will see that finding a benchmarked $\bH$-risk-minimizing strategy corresponds
to looking for a suitable decomposition of the benchmarked claim that works under restricted information. We start with the following definition of orthogonality.
\begin{definition}
Let $X$ be an $\R$-valued $(\bF,\P)$-martingale.
We say that $X$ is {\em weakly orthogonal} to
$\hat S$ if
\begin{equation}\label{orthogonality}
\esp{ X_T\int_0^T \gamma_u^\top \cdot \ud  \hat S_u}=0
\end{equation}
for every $\ds \gamma \in \Theta(\bH)$.
\end{definition}

For reader's convenience, we briefly discuss in the following remark the relationship between the orthogonality condition \eqref{orthogonality} and the (classical) strong orthogonality condition for $(\bF,\P)$-martingales.

\begin{remark}
Let $X$ be an $\R$-valued $(\bF,\P)$-martingale. Since for any $\bH$-predictable process $\gamma$, the process
$\I_{(0,t]}(s) \gamma_s$, with $t \leq T$,
is still $\bH$-predictable,  condition \eqref{orthogonality}
implies that for every $t \in [0,T]$
$$
\esp{X_T \int_0^t \gamma_u^\top \cdot \ud \hat S_u}=0,
$$
and
by conditioning with respect to $\F_t$ (note that $X$ is an $(\bF,\P)$-martingale), we have
$$
\esp{X_t \int_0^t \gamma_u^\top \cdot \ud \hat S_u}= \esp{\int_0^t \gamma_u^\top \cdot \ud \langle X,\hat S \rangle_u}= 0 \quad \mbox{for\ every}\ t \in [0,T], \   \mbox{and}  \   \gamma \in \Theta(\bH).
$$
From this last equality, we can argue that in the case of full information, i.e., $\H_t=\F_t$, for each $t \in [0,T]$, condition \eqref{orthogonality} is equivalent to the strong orthogonality condition between $X$ and $\hat S$ (see e.g. Lemma 2 and Theorem 36, Chapter IV, page 180 of~\cite{pp2004} for a rigorous proof).
\end{remark}

The following result provides a martingale representation of the benchmarked contingent claim that separates its hedgeable part from its non-hedgeable part (see~\cite[Chapter 11, Section 5]{ph} for a further discussion on this issue).

\begin{proposition}\label{prop:GKWpartial}
Let $\hat H_T\in L^2(\H_T,\P)$. Then, the random variable $\hat H_T$ can be uniquely written as
\begin{equation}\label{eq:GKWpartial}
\hat H_T= \hat H_0 + \int_0^T  (\xi^\H_u)^\top \cdot \ud \hat S_u + L^{\hat H}_T,\quad \P-\mbox{a.s.},
\end{equation}
where $\hat H_0\in L^2(\F_0, \P)$, $\xi^\H\in \Theta(\bH)$ and $L^{\hat H}=\{L^{\hat H}_t, \ t\in [0,T]\} \in \mathcal M_0^2(\bF,\P)$ is weakly orthogonal to $\hat S$. Moreover $\hat H_0=\condespho{\hat H_T}$.
\end{proposition}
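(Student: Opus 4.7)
The plan is to produce the decomposition by exploiting the variational characterization already established in Lemmas \ref{lemma:optim} and \ref{lemma:uniq_var_problem}, which give a unique minimizer $\xi^\H\in\Theta(\bH)$ of $\gamma\mapsto \esp{(\hat H_T-\int_0^T\gamma_u^\top\cdot\ud\hat S_u)^2}$. I would take this $\xi^\H$ as the predictable integrand, set $\hat H_0:=\condespfo{\hat H_T}\in L^2(\F_0,\P)$, and define the remainder as the $(\bF,\P)$-martingale
\[
L^{\hat H}_t:=\condespf{\hat H_T}-\hat H_0-\int_0^t(\xi^\H_u)^\top\cdot\ud\hat S_u,\qquad t\in[0,T].
\]
By construction $L^{\hat H}_0=0$, and evaluating at $t=T$ yields the identity \eqref{eq:GKWpartial}. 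Square-integrability of $L^{\hat H}$ follows from $\hat H_T\in L^2$ together with the fact that, since $\xi^\H\in\Theta(\bH)\subseteq\Theta(\bF)$ and $\hat S$ is an $(\bF,\P)$-local martingale, the stochastic integral $\int_0^\cdot(\xi^\H_u)^\top\cdot\ud\hat S_u$ is a genuine element of $\mathcal M^2_0(\bF,\P)$ via \eqref{admissible}; so $L^{\hat H}\in\mathcal M^2_0(\bF,\P)$.

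Next I would verify the weak orthogonality \eqref{orthogonality}. Fix $\gamma\in\Theta(\bH)$. Since $\xi^\H$ is the minimizer, the projection theorem (as already used in Remark \ref{rem:orthogonality} and the proof of Proposition \ref{explicit_delta}) yields
\[
\esp{\left(\hat H_T-\int_0^T(\xi^\H_u)^\top\cdot\ud\hat S_u\right)\int_0^T\gamma_u^\top\cdot\ud\hat S_u}=0.
\]
Moreover $\esp{\hat H_0\int_0^T\gamma_u^\top\cdot\ud\hat S_u}=\esp{\hat H_0\,\condespfo{\int_0^T\gamma_u^\top\cdot\ud\hat S_u}}=0$, because the stochastic integral is an $(\bF,\P)$-martingale null at zero. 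Subtracting the two identities gives $\esp{L^{\hat H}_T\int_0^T\gamma_u^\top\cdot\ud\hat S_u}=0$, which is precisely \eqref{orthogonality}.

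For uniqueness, suppose $\hat H_T=\hat H_0'+\int_0^T(\xi')^\top\cdot\ud\hat S_u+L'_T$ is another such decomposition. Taking $\F_0$-conditional expectations and using that both $\int_0^\cdot(\xi')^\top\cdot\ud\hat S$ and $L'$ are $(\bF,\P)$-martingales null at $0$ forces $\hat H_0'=\condespfo{\hat H_T}=\hat H_0$. The weak orthogonality of $L'$ and the identity above then show that $\xi'$ also satisfies the projection equation \eqref{eq:proj}, hence $\xi'$ solves the variational problem \eqref{optim.problem}, and Lemma \ref{lemma:uniq_var_problem} gives $\xi'=\xi^\H$ in $\Theta(\bH)$. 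Consequently $L'_T=L^{\hat H}_T$ and the martingale property forces $L'\equiv L^{\hat H}$ on $[0,T]$.

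The only delicate point I anticipate is the interplay between the $\bH$-predictability of $\xi^\H$ and the $\bF$-orthogonality property of $L^{\hat H}$: in the classical full-information setting one obtains strong orthogonality $\langle L^{\hat H},\hat S^j\rangle=0$, but here only the weaker integrated condition \eqref{orthogonality} can be expected, because the test processes $\gamma$ are restricted to the smaller class $\Theta(\bH)$. The argument above is tailored to deliver exactly this weaker orthogonality, and it is enough to close the uniqueness argument through Lemma \ref{lemma:uniq_var_problem}.
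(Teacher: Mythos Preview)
Your argument is correct and is precisely the Hilbert-space projection argument that underlies the result. The paper does not give an independent proof here: it simply notes that Proposition 3.2 of \cite{ccr} carries over to the multidimensional case, and that the only point to check is that $\int(\xi^\H_u)^\top\cdot\ud\hat S_u$ is a genuine $(\bF,\P)$-martingale, which follows from \eqref{admissible}. You have effectively written out that projection argument in full, using Lemma~\ref{lemma:uniq_var_problem} to obtain the minimizer $\xi^\H$, defining the residual $L^{\hat H}$ as the $(\bF,\P)$-martingale difference, and reading off weak orthogonality from the first-order optimality condition. This is the same route the cited reference takes; your version has the virtue of being self-contained within the present paper.

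One small discrepancy worth flagging: you set $\hat H_0:=\condespfo{\hat H_T}$, whereas the statement records $\hat H_0=\condespho{\hat H_T}$. Your identification is the one actually forced by the decomposition (take $\F_0$-conditional expectations and use that both martingale pieces vanish at $0$); the version with $\H_0$ in the statement coincides with yours whenever $\F_0$ is trivial, which is the standard standing assumption, but is not derivable from your argument in general. This does not affect the validity of the decomposition or its uniqueness.
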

We refer to \eqref{eq:GKWpartial} as the Galtchouk-Kunita-Watanabe decomposition of $\hat H_T$ with respect to $\hat S$ under partial information.

\begin{proof}
This result extends Proposition 3.2 in \cite{ccr} to the multidimensional case.
The proof uses the same techniques since it is not necessary that $\hat S$ is a true $(\bF, \P)$-martingale. Indeed, it is sufficient that the process $\int (\xi^\H_u)^\top \cdot \ud \hat S_u$ is a true $(\bF, \P)$-martingale, but this property is implied by condition \eqref{admissible}.
\end{proof}

In the sequel we wish to discuss the relationship between the component $\delta^\H$ of the $\bH$-benchmarked risk minimizing strategy $\phi^\H$ and the Galtchouk-Kunita-Watanabe decomposition of the claim $\hat H_T$ under partial information given by \eqref{eq:GKWpartial}.

\begin{theorem} \label{prop:fs}
The benchmarked 
$\bH$-risk-minimizing strategy $\phi^\H=(\eta^\H, \delta^\H)$ is uniquely characterized by taking $\delta^\H$ equal to the integrand of the Galtchouk-Kunita-Watanabe decomposition of $\hat H_T$ with respect to $\hat S$ under partial information, see \eqref{eq:GKWpartial}, i.e.
$$
\delta^\H =\xi^\H.
$$
In particular, the
minimal benchmarked cost and the optimal benchmarked portfolio value satisfy respectively
\begin{equation}\label{cost_optional_proj}
\condesph{ \hat C^{\phi^\H}_t} = \condesph{\hat H_0} + \condesph{L^{\hat H}_t}
\end{equation}
and
\begin{equation}\label{eq:portfolio_proj}
 \condesph{\hat V_t^{\phi^\H}} = \condesph{\hat H_T}
 \end{equation}
for every $t \in [0,T]$.
\end{theorem}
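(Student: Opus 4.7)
The plan is to identify the component $\delta^\H$ with the integrand $\xi^\H$ by verifying that $\xi^\H$ satisfies the optimality criterion for the variational problem \eqref{optim.problem}, and then to derive \eqref{cost_optional_proj} and \eqref{eq:portfolio_proj} from the explicit form of $\eta^\H$ given in Proposition \ref{explicit_delta} together with the $\bH$-martingale property of the $\bH$-optional projection of the benchmarked cost process.

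For the first claim, I would invoke the projection-theorem characterization already used in the proof of Proposition \ref{explicit_delta}: a process $\delta\in\Theta(\bH)$ solves \eqref{optim.problem} if and only if
\[
\esp{\left(\hat H_T-\int_0^T\delta_u^\top\cdot\ud\hat S_u\right)\int_0^T\gamma_u^\top\cdot\ud\hat S_u}=0\qquad\text{for every }\gamma\in\Theta(\bH).
\]
Substituting $\delta=\xi^\H$ and plugging in the Galtchouk-Kunita-Watanabe decomposition \eqref{eq:GKWpartial}, the condition reduces to
\[
\esp{(\hat H_0+L^{\hat H}_T)\int_0^T\gamma_u^\top\cdot\ud\hat S_u}=0.
\]
The $\hat H_0$ contribution vanishes since $\hat H_0\in L^2(\F_0,\P)$ is $\F_0$-measurable and the stochastic integral is an $(\bF,\P)$-martingale null at zero (this last point is guaranteed by condition \eqref{admissible}), while the $L^{\hat H}_T$ contribution vanishes by the weak orthogonality of $L^{\hat H}$ to $\hat S$. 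The uniqueness assertion in Lemma \ref{lemma:uniq_var_problem} then forces $\delta^\H=\xi^\H$ in $\Theta(\bH)$.

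For \eqref{eq:portfolio_proj}, I would apply directly the formula $\eta_t^\H=\condesph{\hat H_T-(\delta^\H_t)^\top\cdot\hat S_t}$ from Proposition \ref{explicit_delta}: since $\hat V_t^{\phi^\H}=(\delta^\H_t)^\top\cdot\hat S_t+\eta_t^\H$, taking $\H_t$-conditional expectations cancels the $(\delta^\H_t)^\top\cdot\hat S_t$ term and leaves $\condesph{\hat H_T}$. For \eqref{cost_optional_proj}, I would invoke the lemma stated just before Lemma \ref{lemma:optim}, which (with $t_0=0$) ensures that the $\bH$-optional projection of $\hat C^{\phi^\H}$ is an $\bH$-martingale, so that $\condesph{\hat C_t^{\phi^\H}}=\condesph{\hat C_T^{\phi^\H}}$. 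Using $\hat V_T^{\phi^\H}=\hat H_T$, the identity $\delta^\H=\xi^\H$, and the decomposition \eqref{eq:GKWpartial}, one obtains $\hat C^{\phi^\H}_T=\hat H_0+L^{\hat H}_T$; the $(\bF,\P)$-martingale property of $L^{\hat H}$ together with the tower rule then yields $\condesph{L^{\hat H}_T}=\condesph{L^{\hat H}_t}$, which gives \eqref{cost_optional_proj}.

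The only delicate point is the compatibility between the $\eta^\H$ supplied by Proposition \ref{explicit_delta} and the form of $\eta$ that makes the $\bH$-optional projection of the cost process an $\bH$-martingale in the preliminary lemma; both are the conditional expectation $\condesph{\hat H_T-\delta^\top\cdot\hat S_t}$, so there is nothing to prove beyond this identification. The substantive work has been isolated in the weak orthogonality built into Proposition \ref{prop:GKWpartial}, and the rest of the argument is direct.
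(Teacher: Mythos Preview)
Your proposal is correct and follows the same overall strategy as the paper: verify that $\xi^\H$ satisfies the projection-theorem characterization of the minimizer of \eqref{optim.problem} and conclude $\delta^\H=\xi^\H$ by uniqueness. The implementations differ in two respects. For the identification $\delta^\H=\xi^\H$, you work directly with the partial-information decomposition \eqref{eq:GKWpartial}, reducing the orthogonality condition to $\esp{(\hat H_0+L^{\hat H}_T)\int_0^T\gamma_u^\top\cdot\ud\hat S_u}=0$ and dispatching the two pieces separately; the paper instead compares \eqref{eq:GKWpartial} with the full-information decomposition \eqref{GKWdecomp} to write $\int_0^t(\xi^\H_u-\delta^\F_u)^\top\cdot\ud\hat S_u=\tilde L^{\hat H}_t-L^{\hat H}_t$ and then checks weak orthogonality of this difference, which feeds into the form \eqref{eq:proj2}. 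Your route avoids invoking $\delta^\F$ altogether and is a bit more economical. For \eqref{eq:portfolio_proj}, you read it off algebraically from the explicit formula for $\eta^\H_t$, whereas the paper argues that $\condesph{\hat V_t^{\phi^\H}}$ is an $(\bH,\P)$-martingale (as the sum of the projected cost and the projected stochastic integral) and then evaluates it at $T$; both are straightforward, with yours being the shorter path. The derivation of \eqref{cost_optional_proj} is essentially the same in both.
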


\begin{proof}
 Set $\tilde \phi^\H:=(\eta^\H, \xi^\H)$. To prove the first part of the statement it is sufficient to show that the portfolio value  $\hat V^{\tilde \phi^\H}$ at time $T$ replicates $\hat H_T$ and $\xi^\H$ solves the optimization problem in Lemma \ref{lemma:optim}, or equivalently that $\hat V_T^{\tilde \phi^\H} = \hat H_T$ $\P-a.s.$  and by \eqref{eq:proj2} that 

\begin{equation}\label{eq:proj3}
\esp{\int_0^T\left(\delta^\F_u -\xi^\H_u\right)^\top \cdot \ud \hat S_u \int_0^T\gamma_u^\top \cdot \ud \hat S_u}  = 0, \quad \mbox{for\ every}\ \gamma \in \Theta (\bH).
\end{equation}
By decompositions   \eqref{GKWdecomp} and \eqref{eq:GKWpartial}, when conditioning to $\F_t$ we get

$$ \int_0^t  (\xi^\H_u)^\top \cdot \ud \hat S_u + L^{\hat H}_t=  \int_0^t (\delta^\F_u)^\top \cdot \ud \hat S_u +  \tilde L^{\hat H}_t.$$

Hence the process
$$\int_0^t  (\xi^\H_u-\delta^\F_u)^\top \cdot \ud \hat S_u = \tilde  L^{\hat H}_t -  L^{\hat H}_t$$

turns to be an $(\bF, \P)$-martingale weakly orthogonal to $\hat S$ and this proves \eqref{eq:proj3}.

Finally to prove \eqref{cost_optional_proj} and \eqref{eq:portfolio_proj} we observe that the process
$\condesph{\hat V^{\phi^\H}_t}$ for every $t \in [0,T]$ is an $(\bH,\P)$-martingale because $\condesph{\hat C^{\phi^\H}_t}$ and $\condesph{\int_0^t (\delta^\H_u)^\top \cdot \ud \hat S_u}$ for every $t \in [0,T]$ are $(\bH,\P)$-martingales. Then, for every $t \in [0,T]$,
\[
\condesph{\hat V^{\phi^\H}_t}=\condesph{\hat V^{\phi^\H}_T}=\condesph{\hat H_T}.
\]
As a consequence the benchmarked cost process satisfies
\begin{align*}
&\condesph{ \hat C^{\phi^\H}_t} = \condesph{\hat V^{\phi^\H}_t-\int_0^t(\delta^\H_u)^\top \cdot \ud \hat S_u}= \condesph{\hat H_T - \int_0^t(\delta^\H_u)^\top \cdot \ud \hat S_u} \\
& = \condesph{\hat H_0 + \int_0^T (\delta^\H_u)^\top \cdot \ud \hat S_u + L^{\hat H}_T- \int_0^t (\delta^\H_u)^\top \cdot \ud \hat S_u} = \condesph{\hat H_0} + \condesph{L^{\hat H}_t},
\end{align*}
for every $t \in [0,T]$.
\end{proof}

\begin{remark}
If the filtration $\bH$ corresponds to the filtration generated by the benchmarked prices, that is, $\bH=\bF^{\hat S}:=\{\F^{\hat S}_t, \ t \in [0,T]\}$ with $\F^{\hat S}_t :=\{\hat S_u, \ 0 \leq u \leq t\}$, then we get the following expressions for the benchmarked portfolio value and the benchmarked cost processes respectively:
\begin{gather*}
\hat V^{\phi^\H}_t=\bE\left[\hat H_T |\F^{\hat S}_t\right]\\
\hat C^{\phi^\H}_t=\bE\left[\hat H_0 |\F^{\hat S}_t\right] + \bE\left[\hat L^{\hat H}_T |\F^{\hat S}_t\right]
\end{gather*}
for every $t \in [0,T]$, instead of \eqref{cost_optional_proj}
and \eqref{eq:portfolio_proj}. This is implied by the fact that $\hat V^{\phi^\H}$ and $\hat C^{\phi^\H}$ are $\bF^{\hat S}$-adapted and then, by the above equalities, $(\bF^{\hat S}, \P)$-martingales. 
\end{remark}

\section{An $\bH$-benchmarked risk-minimizing strategy for a Markovian jump-diffusion market model}\label{sec:jumpdiff}

We now apply the benchmarked risk-minimization under partial information to a financial market affected by the presence of jumps in the underlying primary security account processes.
\subsection{The Markovian jump-diffusion market model}
For the sake of simplicity we restrict the market model introduced in Section \ref{sec:numeraire} to the case of a primary security account process $S^1=\{S_t^1, \ t \in [0,T]\}$ and a savings account $S^0=\{S^0_t, \ t \in [0,T]\}$ living on the probability space $(\Omega,\F,\P)$ endowed with the filtration $\bF$ . Recall that $T \in (0,\infty)$ denotes the finite time horizon.
On this probability space we define two standard one-dimensional $\bF$-Brownian motions $U=\{U_t, \ t \in [0,T]\}$ and $W=\{W_t,\ t \in [0,T]\}$ such that $\langle U, W\rangle_{t}= \rho \ t$ for every $t \in [0,T]$, and a Poisson random measure $N$ with $N(\ud \zeta, \ud t)\in Z \times [0,T] $, $Z \subseteq \R$, having nonnegative intensity $\nu(\ud \zeta)\ud t$. The measure $\nu(\ud \zeta)$ defined on a measurable space $(Z, \mathcal Z)$, is  $\sigma$-finite. The corresponding compensated random measure $\tilde N$, given by
\begin{equation} \label{def:cm}
\tilde N(\ud \zeta, \ud t)=N(\ud \zeta, \ud t)-\nu(\ud \zeta)\ud t,
\end{equation}
is a martingale measure with respect to the filtration $\bF$ and the historical probability $\P$.

We suppose that the savings account $S^0$ and the primary security account $S^1$ depend on some  stochastic factor $X=\{X_t,\ t \in [0,T]\}$ whose dynamics is given by a jump diffusion, and consider the following system of stochastic differential equations (in short SDEs)
\begin{equation}\label{eq:sistema}
\left\{
\begin{aligned}
\ud X_t&= b_0(t, X_t) \ud t + \sigma_0(t,X_t) \ud U_t + \int_Z K_0(\zeta;t, X_{t^-}) \tilde N(\ud \zeta; \ud t), \quad X_0=x\in \R, \\
\ud S^0_t&=S^0_t r(t, X_t)\ud t, \quad S^0_0=1,\\
\ud S^1_t&=S^1_{t^-}\left(b_1(t, X_t, S^1_t)\ud t+\sigma_1(t, X_t, S^1_t)\ud W_t+\int_Z K_1(\zeta;t, X_{t^-}, S^1_{t^-})\tilde N(\ud \zeta, \ud t)\right),\quad S^1_0>0,
\end{aligned}
\right.
\end{equation}
for every $t \in [0,T]$, where the coefficients $r(t,x), b_0(t,x), b_1(t,x,s^1), \sigma_0(t,x), \sigma_1(t,x,y)$, $K_0(\zeta;t,x)$ and $K_1(\zeta; t,x,s^1)$ are $\R$-valued measurable functions of their arguments such that a unique strong solution for the system \eqref{eq:sistema} exists, see for instance~\cite{OS}. In particular this implies that the triplet $(X, S^0, S^1)$ is an $\bF$-Markov process.
Observe that the savings account $S^0$ 
is assumed to be locally risk-free, which means that $S^0$ has finite variation. Moreover, here we assume 
that the interest rate process $r=\{r(t, X_t),\ t \in[0,T]\}$ is $\bF$-adapted and nonnegative.
To specify some minimum requirements, we assume that the following integrability conditions hold $\P$-a.s.
\begin{align}
\int_0^T&\left(|b_0(t, X_t)|+\sigma_0^2(t, X_t)+\int_Z |K_0(\zeta;t, X_t)|\nu(\ud \zeta)\right)\ud t <  \infty,\label{integrab}\\
\int_0^T&\left(|r(t, X_t)|+|b_1(t, X_t, S^1_t)|+\sigma_1^2(t, X_t, S^1_t)+\int_Z |K_1(\zeta;t, X_t, S^1_t)|\nu(\ud \zeta)\right)\ud t <  \infty.\label{integrab2}
\end{align}
In addition, to ensure non-negativity, we suppose that $K_1(\zeta;t, X_{t^-}, S^1_{t^-}) + 1 > 0$ $\P$-a.s. for every $t\in [0,T]$.
Hence, by the Dol\'{e}ans-Dade exponential formula, the dynamics of $S^1$ can be written as
\begin{equation*}
S^1_t=S^1_0 e^ {Y^1_t} \quad \mbox{for every } \ t \in [0,T],
\end{equation*}
where the log-return process $Y^1=\{Y_t^1,\ t \in [0,T]\}$ satisfies
\begin{equation*}
\begin{aligned}
\ud Y^1_t  & = \left(b_1(t,X_t, S^1_t) - \frac{1}{2} \sigma_1^2(t, X_t, S^1_t)- \int_Z K_1(\zeta;t, X_{t}, S^1_t)\nu(\ud \zeta) \right ) \ud t+  \sigma_1(t, X_t, S^1_t) \ud W_t \\
& \qquad +\int_Z \log( 1+  K_1(\zeta;t, X_{t^-}, S^1_{t^-}) ) N(\ud \zeta, \ud t),
\end{aligned}
\end{equation*}
for every $t \in [0,T]$.
Note that condition  \eqref{integrab2} implies that $Y^1$ is an $(\bF,\P)$-semimartingale.

Recall that $\mathcal{V}^+_1$ denotes the set of strictly positive, finite, self-financing portfolios with initial capital equal to 1 in the given market model. 
Given $S^\delta \in \mathcal{V}^+_1$, define the corresponding vector of fractions or portfolio weights, $\pi_\delta=\{\pi_\delta(t)=(\pi_\delta^0(t),\pi_\delta^1(t))^\top,\ t \in [0,T]\}$ with
$$
\pi_\delta^1(t):=\frac{\delta^1_t S^1_{t^-}}{S^\delta_{t^-}} \quad \forall t \in [0,T]
$$
and where $\pi_\delta^0(t):=1-\pi_\delta^1(t)$ is the residual function invested in the savings account at time $t$.

Consequently, the behaviour of a portfolio $S^\delta \in \mathcal{V}^+_1$ is given by the SDE
\begin{align}
\ud S^\delta_t=&S^\delta_{t^-}\Big\{\left[r(t, X_t)+\pi_\delta^1(t)(b_1(t, X_t, S^1_t)-r(t, X_t))\right]\ud t + \pi_\delta^1(t)\sigma_1(t,X_t, S^1_t)\ud W_t \nonumber \\
 & \qquad +\pi_\delta^1(t)\int_Z K_1(\zeta;t, X_{t^-}, S^1_{t^-})\tilde N(\ud \zeta, \ud t)\Big\},\label{eq:posportfolio}
\end{align}
for every $t\in [0,T]$. Observe that, as a consequence of the Dol\'{e}ans-Dade exponential formula, strict positivity of $S^\delta$ is equivalent to the following condition
\begin{equation*} \label{Cla3}
1+\pi_\delta^1(t) \!\! \int_Z K_1(\zeta;t,X_{t^-}, S^1_{t^-})  N(\ud \zeta, \{t\} ) =\!\!\!\int_Z [ 1+\pi_\delta^1(t) K_1(\zeta;t, X_{t^-}, S^1_{t^-}) ] N(\ud \zeta, \{t\} ) > 0 \quad \P-{\rm a.s.} .
\end{equation*}

\noindent \subsection{Num\'{e}raire Portfolio}

As pointed out in Section \ref{sec:numeraire}, the benchmark approach employs a very special portfolio $S^{\delta_*} \in \mathcal{V}^+_1$ as benchmark. This is in several ways the {\em best} performing strictly positive portfolio. It is important to stress the growth optimality of the num\'{e}raire portfolio. Indeed, the num\'{e}raire portfolio and the so-called growth optimal portfolio (in short GOP), which is a portfolio having maximal growth rate, coincide whenever they exist, see e.g. Proposition 2.1 of~\cite{hs}.
This allows the explicit determination of the num\'{e}raire portfolio in the given market model by looking for its GOP, as we will see in the following.
\begin{definition} \label{def:logdrift}
The growth rate $g^\delta$ of a portfolio $S^\delta\in \mathcal{V}^+_1$ is defined as the infinitesimal drift of the SDE satisfied by $\log(S^\delta)$.
\end{definition}
\noindent Given $S^\delta \in \mathcal{V}^+_1$ then by It\^o's formula it follows that
\begin{equation}\label{eq:log}
\begin{aligned}
\ud(\log (S^\delta_t)) &= g^\delta(t)\ud t + \pi^1_\delta(t)\sigma_1(t, X_t, S^1_t)\ud W_t \\
&\quad + \int_Z  \log\left(1+\pi^1_\delta(t)K_1(\zeta;t, X_{t^-}, S^1_{t^-})\right) \I_{\{ 1 + \pi^1_\delta(t)K_1(\zeta;t, X_{t^-}, S^1_{t^-})> 0\}} (\zeta)\tilde N(\ud \zeta, \ud t)
\end{aligned}
\end{equation}
where
\begin{equation} \label{eq:logdrift}
\begin{aligned}
&g^\delta(t):=r(t, X_t)+(b_1(t, X_t, S^1_t)-r(t, X_t))\pi^1_\delta(t)-\frac{1}{2}(\pi^1_\delta(t)\sigma_1(t, X_t, S^1_t))^2\\
&+\int_Z \left\{-\pi^1_\delta(t)K_1(\zeta;t, X_{t^-}, S^1_{t^-})+ \log\left(1+\pi^1_\delta(t)K_1(\zeta;t,X_{t^-}, S^1_{t^-})\right)  \I_{\{ 1 + \pi^1_\delta(t)K_1(\zeta;t, X_{t^-}, S^1_{t^-})> 0\}} (\zeta)\right\} \nu(\ud \zeta),
\end{aligned}
\end{equation}
for every $t\in [0,T]$.

\noindent We can now obtain an explicit characterization of the GOP in the given market model by means of the market price of risk. Denote by $L^1(Z)$ the set of all processes $K(\zeta;t)$ such that $\ds \int_Z |K(\zeta;t)|\nu(\ud \zeta)<\infty$ $\P$-a.s., for every $t\in[0,T]$.
Similarly to~\cite{cp2005}, we define the set $\mathcal B$ that, for every $t\in [0,T]$ is given by
\begin{gather*}
\mathcal B:=\Big\{(\sigma(t),K(\zeta;t)) \in \R \times L^1(Z)| \ (\sigma(t),K(\zeta;t))=\pi^1_\delta(t)(\sigma_1(t,X_t, S^1_t),K_1(\zeta;t, X_{t^-}, S^1_{t^-}))\Big\},
\end{gather*}

where $\pi^1_\delta=\{\pi_\delta^1(t), \ t \in [0,T] \}$ is an $\bF$-predictable process defining a strategy.
Since Assumption \ref{ass:numport} is in force, then strong forms of arbitrage are excluded in this setting and by Theorem 2.9 in~\cite{cp2005}, we know that there exists a continuous linear functional $\Gamma : \mathcal B \to \R$ such that
\begin{equation} \label{def:gamma}
\Gamma(\sigma_1(t,X_t, S^1_t),K_1(\zeta;t, X_{t^-}, S^1_{t^-}))=b_1(t, X_t, S^1_t)-r(t, X_t).
\end{equation}
Such a functional will be called {\em risk premium functional} and being continuous on a subset of $\R \times L^1(Z)$, $\Gamma$ can be represented by the processes $\theta=\{\theta(t,X_t, S^1_t)=\theta_1(t, X_t, S^1_t),\ t \in [0,T]\}$ and $\psi_\theta=\{\psi_\theta(\zeta;t, X_{t^-}, S^1_{t^-}),\ t \in [0,T]\}$ such that
\begin{equation} \label{gammarepre}
\begin{aligned}
&\Gamma \left(\sigma_1(t, X_t, S^1_t), K_1(\zeta;t, X_{t^-}, S^1_{t^-})\right)\\
&\qquad \qquad \qquad \qquad =\theta_1(t,X_t, S^1_t ) \sigma_1(t,X_t, S^1_t)+\int_Z K_1(\zeta;t,X_{t^-}, S^1_{t^-}) \psi_\theta(\zeta;t, X_{t^-}, S^1_{t^-})\nu(\ud \zeta),
\end{aligned}
\end{equation}
where $\theta_1$ is an $\bF$-predictable process that is assumed to be square-integrable in $t$. The process $\psi_\theta$ is  $\bF$-predictable and such that
$\int_Z | K_1(\zeta;t, X_{t^-}, S^1_{t^-}) \psi_\theta(\zeta;t, X_{t^-}, S^1_{t^-})| \nu(\ud \zeta) < \infty$, $ \P$-a.s. for every $t\in [0,T]$ and
\begin{equation}\label{eq:integrabilitapsitheta}
\int_0^T  \int_Z \left|\psi_\theta(\zeta;t, X_{t^-}, S^1_{t^-})\right| \nu(\ud \zeta)\ud t < \infty \quad \P-a.s..
\end{equation}

\noindent Any vector $(\theta_1, \psi_\theta)$ satisfying \eqref{gammarepre}, will be called a {\em market price of risk representation}. In particular 
$\theta_1$ is interpreted as the market price of diffusion risk and $\psi_\theta$ as the market price of jump risk. An economy of risk-neutral agents corresponds to pick $\theta_1=0$ and $\psi_\theta=0$ in \eqref{gammarepre}.
\begin{remark}
From no arbitrage it follows directly that (see Theorem 2.9 in \cite{cp2005})
$$
\int_Z K(\zeta;t) \psi_\theta(\zeta;t, X_{t^-}, S^1_{t^-})\nu(\ud \zeta) < \int_Z K(\zeta;t)\nu(\ud \zeta)
$$
for Lebesgue almost every $t \in [0,T]$ and every nonnegative $K$ appearing as second coordinate in $\mathcal B$. This means that if $\inf_Z K(\zeta;t)>F(t)>0$, for some deterministic process $F$, then
\begin{equation} \label{nuova}
\psi_\theta(\zeta;t, X_{t^-}, S^1_{t^-}) < 1, \quad  \nu(\ud \zeta) \otimes \ud t \otimes \P-\mbox{a.e.}.
\end{equation}
\end{remark}
\noindent As observed in~\cite{cp2005}, the risk premium functional $\Gamma$, satisfying \eqref{def:gamma}, is unique but the representation is not unique because $\Gamma$ is only defined on a subset of $L^1(Z)$. Any extension of this functional to the entire space of integrable
functions on $Z$ taking values in $(-1,+\infty)$, will have identical properties and can be represented by a distinct function $\psi_\theta$.

\noindent By Definition \ref{def:logdrift} and equations \eqref{eq:log}, \eqref{def:gamma} and \eqref{gammarepre}, now we can rewrite the given growth rate, see \eqref{eq:logdrift}, in terms of the risk premium functional:
\begin{equation*}
\begin{aligned}
g^\delta(t)&=r(t, X_t)+\pi^1_\delta(t)\theta_1(t, X_t, S^1_t)\sigma_1(t, X_t, S^1_t)-\frac{1}{2}\left(\pi^1_\delta(t)\sigma_1(t, X_t, S^1_t)\right)^2\\
&\quad \quad +  \pi^1_\delta(t)\int_Z K_1(\zeta;t, X_{t^-}, S^1_{t^-})(\psi_\theta(\zeta;t, X_{t^-}, S^1_{t^-})-1)\nu(\ud \zeta)\\
& \quad \quad \quad +\int_Z \log\left(1+\pi^1_\delta(t)K_1(\zeta;t, X_{t^-}, S^1_{t^-})\right)  \I_{\{ 1 + \pi^1_\delta(t)K_1(\zeta;t, X_{t^-}, S^1_{t^-})> 0\}} (\zeta) \nu(\ud \zeta)\quad \forall t \in [0,T].
\end{aligned}
\end{equation*}
To find the maximum value of $g^\delta$ on $\R$ one starts out by finding a stationary point and by differentiating with respect to $\pi^1_\delta$. We restrict ourself to consider  strategies $\pi^1_\delta$  such that
\begin{equation} \label{nuova1}
1 +  \pi^1_\delta(t)  K_1(t; \zeta, X_{t^-}, S^1_{t^-}) > 0, \quad \nu(\ud \zeta) \otimes \ud t \otimes \P-\mbox{a.e.}.
\end{equation}
This leads to the following first-order condition:
\begin{equation} \label{eq:firstorder}
\begin{split}
&\frac{\ud (g^\delta(t))}{\ud \pi^1_\delta(t)}=\theta_1(t, X_t, S^1_t)\sigma_1(t, X_t, S^1_t)+\int_Z K_1(\zeta;t, X_{t^-}, S^1_{t^-})(\psi_\theta(\zeta;t, X_{t^-}, S^1_{t^-})-1)\nu(\ud \zeta)\\
&\qquad \qquad -\pi^1_\delta(t)\sigma_1^2(t, X_t, S^1_t) +\int_Z K_1(\zeta;t, X_{t^-}, S^1_{t^-})\frac{1}{1+\pi^1_\delta(t)K_1(\zeta;t, X_{t^-}, S^1_{t^-})}  \nu(\ud \zeta) =0 \qquad \forall t \in [0,T].
\end{split}
\end{equation}
By Theorem 2.6 in~\cite{cp2005}, we know that \eqref{eq:firstorder} is a sufficient condition for a  portfolio $S^\delta \in \mathcal{V}^+_1$ to be the GOP in the underlying market. It is easy to check that a reasonable candidate is the couple
\begin{equation} \label{max}
\left(\pi^1_{\delta_*}(t)\sigma_1(t,X_t, S^1_t), \pi^1_{\delta_*}(t)K_1(\zeta;t, X_{t^-}, S^1_{t^-})\right)=\left(\theta_1(t,X_t, S^1_t), \frac{\psi_\theta(\zeta;t, X_{t^-}, S^1_{t^-})}{1-\psi_\theta(\zeta;t, X_{t^-}, S^1_{t^-})}\right) \in \mathcal B,
\end{equation}
for almost every $t\in[0,T]$.
\noindent Thus, by \eqref{def:gamma}, \eqref{gammarepre} and plugging \eqref{max} into \eqref{eq:posportfolio}, it is possible to derive the following dynamics:
\begin{equation}\label{GOPdyn}
\begin{split}
\ud S^{\delta_*}_t = & S^{\delta_*}_{t^-}\bigg\{\left[r(t, X_t)+\theta_1^2(t, X_t, S^1_t)+\int_Z \frac{\psi_\theta^2(\zeta;t, X_{t^-}, S^1_{t^-})}{1-\psi_\theta(\zeta;t, X_{t^-}, S^1_{t^-})}\nu(\ud \zeta)\right]\ud t \\
& \qquad \qquad \qquad \qquad \qquad + \theta_1(t, X_t, S^1_t)\ud W_t  + \int_Z \frac{\psi_\theta(\zeta;t, X_{t^-}, S^1_{t^-})}{1-\psi_\theta(\zeta;t, X_{t^-}, S^1_{t^-})}\tilde N(\ud \zeta;\ud t)\bigg\}
\end{split}
\end{equation}
for every $t \in [0,T]$. Let us observe that $ S^{\delta_*}$ is strictly positive if and only if
$$ \int_Z \frac{1}{1- \psi_\theta(\zeta; t, X_{t^-}, S^1_{t^-})} N(\ud \zeta, \{t\}) > 0, \quad \P-\mbox{a.s.}$$
for each $t \in [0,T]$ and this condition is implied by \eqref{nuova}.

\medskip

\noindent \subsubsection{Supermartingale property}
We now check that $S^0$ and $S^1$ are $(\bF,\P)$-supermartingales when they are denominated in units of the num\'{e}raire portfolio $S^{\delta_*}$. We start with the savings account $S^0$. By The product rule, for every $t \in [0,T]$, we have
\begin{align}
\ud(\hat S^0_t)&=\ud \left(\frac{S^0_t}{S^{\delta_*}_t}\right)=\frac{1}{S^{\delta_*}_{t^-}} \ud S^0_t+ S^0_t \ud \left(\frac{1}{S^{\delta_*}_t}\right)
=-\hat S^0_{t^-}\left[\theta_1(t, X_t, S^1_t) \ud W_t + \int_Z \psi_\theta(\zeta;t, X_{t^-}, S^1_{t^-}) \tilde N(\ud \zeta;\ud t)\right]\label{eq:s0}.
\end{align}
In other terms $\hat S^0$ can be written as the Dol\'{e}ans-Dade exponential of the process $\hat Y^0$
\[
\hat Y^0_t := -\int_0^t \theta_1(u, X_u, S^1_u) \ud W_u - \int_0^t \int_Z \psi_\theta(\zeta;u, X_{u^-}, S^1_{u^-}) \tilde N(\ud \zeta;\ud u)
\]
which is a $(\bF, \P)$-local martingale since $\theta_1$ is assumed to be square integrable and $\psi_\theta$ satisfies   \eqref{eq:integrabilitapsitheta} and \eqref{nuova}.

Then $\hat S^0=\mathcal{E}(\hat Y^0)$ is a nonnegative $(\bF, \P)$-local martingale.

We now look at the primary security account dynamics. By the integration by parts formula, for every $t \in [0,T]$ we get:
\begin{align*}
\ud(\hat S^1_t)&=\ud \left(\frac{S^1_t}{S^{\delta_*}_t}\right)=\frac{1}{S^{\delta_*}_{t^-}} \ud S^1_t+ S^1_{t^-} \ud \left(\frac{1}{S^{\delta_*}_t}\right)+\ud \left[S^1,\frac{1}{S^{\delta_*}}\right]_t\\
&=\hat S^1_{t^-}\left\{b_1(t, X_t, S^1_t)\ud t + \sigma_1(t, X_t, S^1_t)\ud W_t + \int_Z K_1(\zeta;t, X_{t^-}, S^1_{t^-}) \tilde N(\ud \zeta;\ud t) \right\}\\
& \quad -\hat S^1_{t^-}\left\{r(t, X_t) + \theta_1(t,X_t, S^1_t)\ud W_t + \int_Z\psi_\theta(\zeta;t, X_{t^-}, S^1_{t^-})\tilde N(\ud \zeta;\ud t)\right\}\\
& \qquad - \hat S^1_{t^-} \sigma_1(t, X_t, S^1_t)\theta_1(t, X_t,S^1_t)\ud t - \hat S^1_{t^-}\int_Z K_1(\zeta;t, X_t, S^1_{t^-})\psi_\theta(\zeta;t, X_{t^-}, S^1_{t^-})\nu(\ud \zeta) \ud t\\
& \qquad \quad -  \hat S^1_{t^-}\int_Z K_1(\zeta;t, X_{t^-}, S^1_{t^-})\psi_\theta(\zeta;t, X_{t^-}, S^1_{t^-})\tilde N(\ud \zeta;\ud t).
\end{align*}
By \eqref{def:gamma} and \eqref{gammarepre} the finite variation terms vanish and we obtain
\begin{equation}\label{eq:s1}
\ud(\hat S^1_t)=\hat S^1_{t^-}\Big\{(\sigma_1(t, X_t, S^1_t)-\theta_1(t, X_t, S^1_t))\ud W_t +  \int_Z K_{\theta}(\zeta;t, X_{t^-}, S^1_{t^-})\tilde N(\ud \zeta;\ud t)\Big\},
\end{equation}
for every $ t \in [0,T]$, where
\[
K_{\theta}(\zeta;t, X_{t^-}, S^1_{t^-}):=K_1(\zeta;t, X_{t^-}, S^1_{t^-})-\psi_\theta(\zeta;t, X_{t^-},S^1_{t^-}) -K_1(\zeta;t, X_{t^-}, S^1_{t^-})\psi_\theta(\zeta;t, X_{t^-}, S^1_{t^-})
.\]
Nothe that
\begin{equation}\label{eq:nonnegktheta}
1+K_{\theta}(\zeta;t, X_{t^-}, S^1_{t^-})>0 \quad \P-a.s. \quad \mbox{ for every } \ t \in [0,T]
\end{equation}
since for every $t \in [0,T]$
\[
1+K_{\theta}(\zeta;t, X_{t^-}, S^1_{t^-})=\big(1+K_1(\zeta;t, X_{t^-}, S^1_{t^-})\big)\big(1-\psi_{\theta}(\zeta;t, X_{t^-}, S^1_{t^-}) \big).
\]
Then $\hat S^1$ is the Dol\'{e}ans-Dade exponential of the process $\hat Y^1$ given by
\[
\hat Y^1_t:=\int_0^t(\sigma_1(u, X_u, S^1_u)-\theta_1(u, X_u, S^1_u))\ud W_u +  \int_Z K_{\theta}(\zeta;u, X_{u^-}, S^1_{u^-})\tilde N(\ud \zeta;\ud u)
\]
which is an $(\bF, \P)$-local martingale since
\[
\int_0^T(\sigma_1(t, X_t, S^1_t)-\theta_1(t, X_t, S^1_t))^2 \ud t \leq 2 \int_0^T \left\{\sigma_1^2(t, X_t, S^1_t)+\theta_1^2(t, X_t, S^1_t)\right\}\ud t <\infty \quad \P-a.s.
\]
\begin{align*}
& \int_0^T \int_Z |K_{\theta}(\zeta;t, X_{t^-}, S^1_{t^-})|\nu(\ud \zeta)\ud t \leq \int_0^T  \int_Z |K_1(\zeta;t, X_{t^-}, S^1_{t^-})|\nu (\ud \zeta) \ud t  \\
&+ \int_0^T \!\! \int_Z \!\! |\psi_{\theta}(\zeta;t, X_{t^-}, S^1_{t^-})|\nu (\ud \zeta) \ud t +\int_0^T \!\! \int_Z \!\!|K_1(\zeta;t, X_{t^-}, S^1_{t^-})\psi_{\theta}(\zeta;t, X_{t^-}, S^1_{t^-})|\nu(\ud \zeta)\ud t<\infty \quad \P-a.s..
\end{align*}

This implies that $\hat S^1=\mathcal{E}(\hat Y^1)$ is a nonnegative $(\bF, \P)$-local martingale.

Then, the dynamics given in \eqref{GOPdyn} has the property that benchmarked prices become $(\bF,\P)$-supermartingales. Thanks to Lemma 2.5 in~\cite{cp2005}, we can conclude that the given candidate $S^{\delta_*}$ is growth optimal.

\subsection{The $\bH$-benchmarked risk minimizing strategy}

We recall that the information available to traders is given by the filtration $\bH$ such that $\H_t \subseteq \F_t$ for every $t \in [0,T]$, and that in virtue of Assumption \ref{ass:misurabilitaS_T}, $\hat S_T$ is $\H_T$-measurable.
This model covers a great variety of situations. For example we may consider the case where agents can observe the prices but not the stochastic factor $X$ which influences their dynamics, or also the case where agents have information about the prices only at discrete times.

The process $X$ in \eqref{eq:sistema} may represent, for example, the trend of a correlated market, some macroeconomics factor or microstructure rule that drives the market.

The goal of this section is to characterize the benchmarked portfolio value in terms of a suitable function $g$ that solves a certain problem, and then to perform explicitly the optimal strategy.

In this framework, we assume that the benchmarked contingent claim $\hat H_T$ has a Markovian structure, i.e.
\begin{equation}\label{eq:claim}
\hat H_T= \hat H (T, \hat S_T),
\end{equation}
where $\hat H (t, \hat s)$ is a deterministic function, $\hat s$ denote the two-dimensional vector  $\hat s= (\hat s_0, \hat s_1)$ and finally and $\hat S$ is the $\R^2$-valued process given by $\hat S=(\hat S^0, \hat S^1)$. We also introduce the notation $ s$ for the two-dimensional vector $s= (s_0,  s_1)$ and $S$ for the two-dimensional vector process $S=( S^0, S^1)$.

\medskip

To ensure the correct mathematical tractability of the problem,
we make the following  assumption.
\begin{ass}\label{ass:welldefn}
The processes $\theta_1$, $\psi_\theta$, $K_\theta$ are such that strong solutions of the SDEs \eqref{eq:s0} and \eqref{eq:s1} exist.
\end{ass}

Set
\begin{gather*}
D_0(\omega, t):=\{\zeta \in Z: K_0(\zeta; t, X_{t^-}(\omega))\neq 0\}, \\
D_1(\omega,t):=\{\zeta \in Z: K_1(\zeta; t, X_{t^-}(\omega), S^1_{t^-}(\omega))\neq 0\},
\end{gather*}
for all $(\omega,t)\in \Omega \times [0,T]$. Here we are interested in the case where $D_0(\omega,t)\cap D_1(\omega, t)\neq \emptyset$, for all $(\omega,t)\in \Omega \times [0,T]$, which means that $X$ and $S^1$ may have common jump times.

This particular feature of the model has a financial motivation. Indeed, in this way we can also take into account the possibility of catastrophic events which may affect, at the same time, the prices and the hidden component that influences the market.

Define the function
\begin{equation}\label{funz:lambda}
\lambda_1(\omega, t):=\nu(D_1(\omega, t)), 
\end{equation}
for all $(\omega,t)\in \Omega \times [0,T]$.
Then 
$ \lambda_1:=\{\lambda_1( t), \ t\in [0,T]\} $ provides the $\bF$-predictable intensity of the point process $N^1_t$ which counts the total number of jumps of $S^1$ until time $t$ (see ~\cite{cg} and ~\cite{C} for the proof).

Analogously, we can define the function
\begin{equation}\label{funz:lambda0}\lambda_0(\omega, t):= \nu(D_0(\omega,t)),\end{equation}
for all $(\omega,t)\in \Omega \times [0,T]$; then, $\lambda_0:=\{ \lambda_0(t), \ t \in[0,T]\} $ provides the $\bF$-predictable intensity of the point process $N^X_t $ which counts the total number of jumps of $X$ until time $t$. The following conditions imply that both of the processes $N^1$ and $N^X$ are non-explosive and integrable (see e.g.~\cite[Chapter 3, Theorem T8]{Br}):
\begin{equation}\label{hp:nonexplosion}
\esp{\int_0^T\left\{\lambda_1(t)+ \lambda_0(t)\right\} \ud t}<\infty.
\end{equation}
Furthermore, we introduce the stochastic set $D_\theta$ as follows
 \begin{equation}\label{eq:Dtheta}
 D_\theta(\omega, t):= \{\zeta \in Z: \psi_\theta(\zeta; t, X_{t^-}(\omega), S^1_{t^-}(\omega))\neq 0 \ {\rm and} \ K_\theta(\zeta; t, X_{t^-}(\omega), S^1_{t^-}(\omega)) \neq 0 \},
 \end{equation}
 for every $(\omega, t) \in \Omega, \times [0,T]$. Since we will compute conditional expectation we make the following assumption.
\begin{ass}\label{ass:strategia}
The following integrability conditions hold:
\begin{gather*}
\esp{\int_0^T \left\{|b_0(t, X_t)|+ \sigma_0^2(t, X_t) + \int_{Z}| K_0(\zeta; t, X_{t^-})|^2  \nu(\ud \zeta)\right\}\ud t }<\infty,
\end{gather*}
\begin{gather*}
\esp{\int_0^T \left\{|b_1(t, X_t, S^1_t)|+ \sigma_1^2(t, X_t, S^1_t) + \int_{Z}| K_1(\zeta; t, X_{t^-}, S^1_{t^-})|^2  \nu(\ud \zeta)\right\}\ud t }<\infty,
\end{gather*}
\begin{gather*}
\esp{\int_0^T \left\{ r(t, X_t) + \theta_1^2(t, X_t, S^1_t) + \int_{Z}| \psi_\theta( \zeta; t, X_{t^-}, S^1_{t^-})|^2  \nu(\ud \zeta)\right\} \ud t} <\infty, \quad
\esp{\int_0^T \nu(D_\theta(t))\ud t}<\infty.
\end{gather*}
\end{ass}

We recall that $S=(S^0, S^1)$ and $\hat S=(\hat S^0, \hat S^1)$. Denote by $\C_b^{1,2,2,2}([0,T]\times \R \times \R_+^2\times \R_+^2)$ the space of all bounded functions that are $\C^1$ with respect to the time variable $t$, $\C^2$ with respect to $x$,$s$ and $\hat s$, with bounded derivatives.
Then, the following result gives the Markovian structure of the triplet $(X, S, \hat S)$.
\begin{lemma}\label{lemma:generatore}
Under Assumptions \ref{ass:welldefn}, \ref{ass:strategia} and condition \eqref{hp:nonexplosion}, the triplet $(X, S, \hat S)$ is an $(\bF,\P)$-Markov process with generator
\begin{equation}\label{eq:generatore}
\L^{X,S, \hat S} f(t,x,s,\hat s) = \L_1 f(t,x,s,\hat s) + \L_2 f(t,x,s,\hat s) + \L_J f(t,x,s,\hat s),
\end{equation}
where
 \begin{align*}
\L_1 f(t,x,s,\hat s) := \frac{\partial f}{\partial t}(t,x,s,\hat s)+b_0(t,x)\frac{\partial f}{\partial x}(t,x,s,\hat s)+ r(t,x) s_0 \frac{\partial f}{\partial s_0}(t,x,s,\hat s)+  b_1(t,x,s_1) \, s_1 \frac{\partial f}{\partial s_1}(t,x,s,\hat s),
\end{align*}
\begin{align*}
&\L_2 f(t,x,s,\hat s):= \frac{1}{2} \sigma_0^2(t,x) \frac{\partial^2 f}{\partial x^2}(t,x,s,\hat s) + \frac{1}{ 2} \sigma_1^2 (t,x,s_1)\, s_1^2  \frac{\partial^2 f}{\partial s_1^2}(t,x,s,\hat s) \\
& \qquad + \frac{1}{2} \theta_1^2(t,x,s_1) \, \hat s_0^2  \frac{\partial^2 f}{\partial \hat s_0^2}(t,x,s,\hat s) +\frac{1}{2} (\sigma_1(t,x,s_1)-\theta_1(t,x,s_1))^2\, \hat s_1^2  \frac{\partial^2 f}{\partial \hat s_1^2}(t,x,s,\hat s) \\
& \qquad+ \rho \, \sigma_0(t,x) \, \sigma_1(t,x,s_1) s_1  \frac{\partial^2 f}{\partial x\partial s}(t,x,s,\hat s)
- \rho \, \sigma_0(t,x,s_1) \, \theta_1(t,x,s_1) \hat s_0  \frac{\partial^2 f}{\partial x\partial \hat s_0}(t,x,s,\hat s)\\
 & \qquad + \rho \, \sigma_0(t,x,s_1) \, (\sigma_1(t,x,s_1)-\theta_1(t,x,s_1)) \hat s_1  \frac{\partial^2 f}{\partial x\partial \hat s_1}(t,x,s,\hat s)\\
&\qquad - \sigma_1(t,x,s_1) \theta_1(t,x,s_1) s_1 \hat s_0  \frac{\partial^2 f}{\partial s_1\partial \hat s_0}(t,x,s,\hat s)\\
&\qquad +  \sigma_1(t,x,s_1) (\sigma_1(t,x,s_1)-\theta_1(t,x,s_1)) s_1 \hat s_1  \frac{\partial^2 f}{\partial s_1\partial \hat s_1}(t,x,s,\hat s) \\
&\qquad - \theta_1(t,x,s_1) \, (\sigma_1(t,x,s_1)-\theta_1(t,x,s_1)) \hat s_1 \hat s_0  \frac{\partial^2 f}{\partial \hat s_0 \partial \hat s_1}(t,x,s,\hat s),
\end{align*}
\begin{align*}
& \L_J f(t,x,s,\hat s) := - \int_Z \left\{\frac{\partial f}{\partial x}(t,x,s,\hat s) K_0(\zeta;t,x) + \frac{\partial f}{\partial s_1}(t,x,s,\hat s) s_1 K_1(\zeta;t,x,s_1)\right\}\nu(\ud \zeta)\\
& \quad - \int_{Z}\left\{-\frac{\partial f}{\partial \hat s_0}(t,x,s,\hat s) \hat s_0 \psi_\theta(\zeta;t,x,s_1) - \frac{\partial f}{\partial \hat s_1}(t,x,s,\hat s) \hat s_1 K_\theta(\zeta;t,x,s_1)  \right\}\nu(\ud \zeta) +  \int_Z \Delta f (\zeta;t,x,s,\hat s)\nu(\ud \zeta),
\end{align*}
and
\[
\begin{split}
\Delta f (\zeta;t,x,s,\hat s)\! = & \! f \Big (t,x+K_0(\zeta;t,x),s_0, s_1( 1 +K_1(\zeta;t,x,s_1)),\hat s_0( 1 - \psi_{\theta}(\zeta;t,x,s_1)), \hat s_1 (1+K_\theta(\zeta;t,x,s_1)  ) \Big)\\
& - f(t,x,s_0, s_1, \hat s_0, \hat s_1 ).
\end{split}
\]
More precisely,  for any  function $f(t,x,s,\hat s) \in \C_b^{1,2,2,2}([0,T]\times \R \times \R_+^2\times \R_+^2)$, the following semimartingale decomposition holds
\begin{equation}  \label{eq:semimg_representation}
f(t,X_t,S_t, \hat S_t) = f(t,x_0,S_0, \hat S_0) + \int_0^t \L^{X,S,\hat S} f(r, X_r, S_r, \hat S_r) \ud r + M^f_t, \quad t \in [0,T],
\end{equation}
where $M^f=\{M_t^f,\ t \in [0,T]\}$ is the $(\bF,\P)$-martingale given by
\begin{align}
M^f_t:=& \int_0^t \frac{\partial f}{\partial x}(u, X_u, S_u, \hat S_u)\sigma_0(u, X_u) \ud U_u \nonumber \\
& + \int_0^t  \left\{\frac{\partial f}{\partial s_1}(u, X_u, S_u,\hat S_u) \sigma_1(u, X_u, S^1_r) S^1_u - \frac{\partial f}{\partial \hat s_0}(u, X_u, S_u, \hat S_u) \hat S^0_u \theta_1(u, X_u, S^1_u) + \right. \nonumber\\
& \left. \hspace{3cm} + \frac{\partial f}{\partial \hat s_1}(u, X_u, S_u, \hat S_u) \hat S^1_u (\sigma_1(u, X_u, S^1_u)-\theta_1(u, X_u, S^1_u)) \right\}  \ud W_u \nonumber \\
& + \int_0^t \int_Z  \Big (f (u,X_u, S_u, \hat S_u) -f(u,X_{u^-},S_{u^-}, \hat S_{u^-})\Big) \tilde N(\ud \zeta; \ud u).\label{eq:mf}
\end{align}

\end{lemma}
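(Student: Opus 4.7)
The proof proceeds by applying the It\^o formula for jump-diffusions to $f(t, X_t, S_t, \hat S_t)$ with $f \in \C_b^{1,2,2,2}([0,T]\times \R \times \R_+^2\times \R_+^2)$, and then identifying the finite-variation part with $\L^{X,S,\hat S} f$ and the local-martingale part with $M^f$. The plan is first to read off the semimartingale characteristics of each of the five scalar components $X$, $S^0$, $S^1$, $\hat S^0$, $\hat S^1$ from \eqref{eq:sistema}, the defining ODE for $S^0$, and \eqref{eq:s0}--\eqref{eq:s1}. The continuous-martingale parts are driven by $U$ (for $X$) and by $W$ (for $S^1$, $\hat S^0$, $\hat S^1$), with $\langle U,W\rangle_t=\rho\, t$, while $S^0$ has no martingale part. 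The jump part of every component is driven by the common random measure $\tilde N$, so that all five processes may jump simultaneously.

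Then I would apply the general It\^o formula to obtain, for every $t\in[0,T]$,
\begin{align*}
f(t,X_t,S_t,\hat S_t) &= f(0,x_0,S_0,\hat S_0)+\int_0^t \L_1 f(u,X_u,S_u,\hat S_u)\,\ud u + \int_0^t \L_2 f(u,X_u,S_u,\hat S_u)\,\ud u\\
&\quad + (\text{continuous martingale terms from }U\text{ and }W) + (\text{jump terms from }N).
\end{align*}
The term $\L_1$ collects the first-order contributions of the drifts $b_0$, $r s_0$, $b_1 s_1$ together with $\partial_t f$. The term $\L_2$ collects the second-order contributions, including the cross covariations: the brackets $\ud\langle U,W\rangle_u=\rho\,\ud u$ produce the four $\rho$-terms linking $\partial_x$ with $\partial_{s_1}$, $\partial_{\hat s_0}$, $\partial_{\hat s_1}$, while the common Brownian $W$ driving $S^1$, $\hat S^0$, $\hat S^1$ produces the remaining three cross terms (from $\sigma_1\theta_1$, $\sigma_1(\sigma_1-\theta_1)$, and $\theta_1(\sigma_1-\theta_1)$). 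The savings account $S^0$ contributes only through $\L_1$ because it has finite variation.

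For the jump piece, the standard device is to write each integral against $N(\ud\zeta,\ud u)$ using the finite-variation jump formula, where a jump at $(u,\zeta)$ sends $(X_{u^-},S^1_{u^-},\hat S^0_{u^-},\hat S^1_{u^-})$ to $(X_{u^-}+K_0, S^1_{u^-}(1+K_1), \hat S^0_{u^-}(1-\psi_\theta), \hat S^1_{u^-}(1+K_\theta))$ with $S^0$ unaffected, and the resulting increment of $f$ is exactly $\Delta f(\zeta;u,X_{u^-},S_{u^-},\hat S_{u^-})$. Splitting $N=\tilde N+\nu(\ud\zeta)\ud u$ puts the compensator into the drift, producing the $\int_Z \Delta f\,\nu(\ud\zeta)$ term of $\L_J$; the remaining $-\int_Z\{\partial_x f\,K_0+\partial_{s_1} f\, s_1 K_1-\partial_{\hat s_0} f\,\hat s_0\psi_\theta-\partial_{\hat s_1} f\,\hat s_1 K_\theta\}\nu(\ud\zeta)$ piece of $\L_J$ comes from correcting the first-order drifts of $S^1$, $\hat S^0$, $\hat S^1$, which in \eqref{eq:sistema} and \eqref{eq:s0}--\eqref{eq:s1} use $\tilde N$ rather than $N$. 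Adding $\L_1+\L_2+\L_J$ gives \eqref{eq:generatore}.

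Finally, the remaining stochastic integrals against $\ud U$, $\ud W$, $\tilde N(\ud\zeta,\ud u)$ assemble into $M^f$ in \eqref{eq:mf}, where the jump integrand $f(u,\cdot)-f(u^-,\cdot)$ equals, on the support of $N$, the same $\Delta f(\zeta;u,X_{u^-},S_{u^-},\hat S_{u^-})$. The main (and really only delicate) step is checking that $M^f$ is a true $(\bF,\P)$-martingale rather than merely a local one. Since $f$ and all its relevant derivatives are bounded, the $\ud U$ and $\ud W$ integrands are controlled respectively by $|\sigma_0|$ and by $|\sigma_1|S^1+|\theta_1|\hat S^0+|\sigma_1-\theta_1|\hat S^1$, whose squares are integrable by Assumption \ref{ass:strategia} after using boundedness of $f$'s derivatives on the image of $(X,S,\hat S)$; the jump integrand is similarly bounded in terms of $|K_0|$, $|K_1|$, $|\psi_\theta|$, $|K_\theta|$, and Assumption \ref{ass:strategia} together with the non-explosion condition \eqref{hp:nonexplosion} gives the required $L^2$ bound on $\tilde N$-integrals. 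Hence $M^f\in\mathcal M^2_0(\bF,\P)$, the decomposition \eqref{eq:semimg_representation} holds, and since $\L^{X,S,\hat S} f(u,\cdot)$ depends on $(X_u,S_u,\hat S_u)$ only, the triplet $(X,S,\hat S)$ is an $(\bF,\P)$-Markov process with generator $\L^{X,S,\hat S}$.
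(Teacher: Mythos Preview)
Your approach is essentially the same as the paper's: apply It\^o's formula to $f(t,X_t,S_t,\hat S_t)$, read off the drift as $\L_1+\L_2+\L_J$ and the local-martingale part as $M^f$, and then use the integrability in Assumption~\ref{ass:strategia} together with \eqref{hp:nonexplosion} to upgrade $M^f$ to a true martingale. The identification of the three pieces and the martingale estimates match the paper's line by line; for the jump integral the paper uses the crude bound $|\Delta f|\le 2\|f\|_\infty$ on the support of $N$ and controls the compensator via $\lambda_0+\lambda_1+\nu(D_\theta)$, which is a slightly cleaner variant of what you sketch.

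The one point where you are too quick is the Markov property. Saying that ``$\L^{X,S,\hat S}f(u,\cdot)$ depends on $(X_u,S_u,\hat S_u)$ only'' is necessary but not sufficient; what is needed is \emph{uniqueness} of the solution to the associated martingale problem. The paper makes this step explicit: Assumption~\ref{ass:welldefn} (together with the standing assumption that \eqref{eq:sistema} admits a unique strong solution) guarantees that $(X,S,\hat S)$ is the unique solution of the martingale problem for $\L^{X,S,\hat S}$, and then Theorem~3.3 of Kurtz--Ocone~\cite{KO} yields the $(\bF,\P)$-Markov property. You should add this uniqueness argument rather than inferring Markovianity directly from the form of the generator.
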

\begin{proof}
Thanks to Assumption \ref{ass:welldefn}, the process $(X,S, \hat S)$ is the unique solution of the martingale problem associated to the operator $\L^{X,S, \hat S}$ (see Theorem 3.3 in~\cite{KO}). Then, it is an $(\bF,\P)$-Markov process.

By applying It\^{o}'s formula to the function $f(t,x,s,\hat s) \in \C_b^{1,2,2,2}([0,T]\times \R \times \R^2\times \R^2)$, we get the expression \eqref{eq:semimg_representation}. Moreover, by Assumption \ref{ass:strategia} and condition \eqref{hp:nonexplosion}, the following integrability conditions hold

\begin{gather*}
\esp{\int_0^T \left|\L^{X,S,\hat S}f(t, X_t, S_t, \hat S_t)\right| \ud t} <\infty, \\
\esp{\int_0^T  \sigma^2_0(t, X_t) \left( \frac{\partial f}{\partial x}(t, X_t, S_t, \hat S_t) \right)^2\ud t} \leq B_f \ \esp{\int_0^T  \sigma^2_0(t, X_t)\ud t}< \infty,
\end{gather*}
\begin{gather*}
\begin{aligned}[t]
&\mathbb{E} \left[ \int_0^T \left\{\sigma_1(t, X_t, S^1_t) S^1_t  \frac{\partial f}{\partial s_1}(t, X_t, S_t, \hat S_t) - \theta_1(t, X_t, S^1_t) \hat S^0_t  \frac{\partial f}{\partial \hat s_0}(t, X_t, S_t, \hat S_t)  \right. \right.\\
&\quad \left. \left. + (\sigma_1(t, X_t, S^1_t)-\theta_1(t, X_t, S^1_t)) \hat S^1_t  \frac{\partial f}{\partial \hat s_1}(t, X_t, S_t, \hat S_t) \right\}^2 \ud t \right]\leq \\
&\quad \bar B_f \ \esp{ \int_0^T \!\! \left\{\sigma_1^2(t, X_t, S^1_t) (S^1_t)^2  + \theta_1^2(t, X_t, S^1_t) (\hat S^0_t)^2 + (\sigma_1^2(t, X_t, S^1_t)+\theta_1^2(t,X_t,S^1_t)) (\hat S^1_t)^2\right\} \ud t }< \infty,
\end{aligned}
\end{gather*}
for some suitable positive constants $B_f$, $\bar B_f$, and
\[
\begin{aligned}
\esp{\int_0^T\!\!\!\! \int_Z\! |f(t,X_t, S_t, \hat S_t )-  f(t,X_{t^-},S_{t^-}, \hat S_{t^-}) | \nu(\ud \zeta)\;\ud t}\! \leq 2 \|f\|_{\infty}\esp{\int_0^T\!\!\! \{ \lambda_0(t) + \lambda_1(t) + \nu(D_\theta(t))\} \ud t }\!<\infty,
\end{aligned}
\]
where $\lambda_1$ and $\lambda_0$ and $D_\theta$ are defined in \eqref{funz:lambda}, \eqref{funz:lambda0} and \eqref{eq:Dtheta} respectively. 
This means that all integrals in \eqref{eq:semimg_representation} are well-defined and $M^f$ is indeed an $(\bF,\P)$-martingale.
\end{proof}

Thanks to the Markov property there exists a measurable function $g(t,x,s,\hat s)$ such that
\[
g(t, X_t, S_t, \hat S_t)=\condespf{\hat H_T}.
\]

We recall that the notation ${}^pZ$ denotes the $\bH$-predictable projection of the process $Z$. The following proposition characterizes the benchmarked $\bH$-risk-minimizing strategy for the Markovian jump-diffusion market model considered in this section.

\begin{proposition}\label{prop:strategia_esplicita}
Suppose that Assumptions  \ref{ass:welldefn} and \ref{ass:strategia}  are in force and that
\eqref{hp:nonexplosion} and \eqref{eq:claim} hold. Set
\[g(t,X_t,S_t, \hat S_t):=\condespf{\hat H_T }, \quad \forall t \in [0,T].\]
If $g\in \mathcal{C}^{1,2,2,2}_b([0,T] \times \R \times \R_+^2\times \R_+^2 )$, then it solves the problem
\begin{gather}\label{eq:dir}
\left\{
\begin{aligned}
&\L^{X,S, \hat S}g(t,x,s, \hat s)=0\\
&g(T,x,s, \hat s)=\hat H (T, \hat s).
\end{aligned}
\right.
\end{gather}
Moreover, if $\ds \ {}^pa^{0,0}(t) {}^pa^{1,1}(t) - \left[{}^pa^{0,1}(t)\right]^2\neq 0$ for every $t \in [0,T]$, with $a^{0,0}(t), a^{1,1}(t)$ and $a^{0,1}(t)$ respectively given by \eqref{eq:Sigma0}, \eqref{eq:Sigma1} and \eqref{eq:Sigma2},
then the $\bH$-benchmarked risk-minimizing strategy $\phi^\H=(\eta^\H, \delta^\H)$  with $\delta^\H=(\delta^{\H,0},\delta^{\H,1})$ is explicitly given by

\begin{equation*}
\left\{
\begin{aligned}
&\delta^{\H,0}_t=
\frac{{}^pa^{0,0}(t) \ {}^ph_0(t)- {}^pa^{0,1}(t) \ {}^ph_1(t)}
{{}^pa^{0,0}(t) \ {}^pa^{1,1}(t) - \left[{}^pa^{0,1}(t)\right]^2 },\\
&\delta^{\H,1}_t=
\frac{ - {}^pa^{0,1}(t) \ {}^ph_0(t) + {}^pa^{1,1}(t) \ {}^ph_1(t)  }
{ {}^pa^{0,0}(t) \ {}^pa^{1,1}(t) - \left[{}^pa^{0,1}(t)\right]^2 },\\
&\eta_t^\H=\condesph{g(t,X_t, S_t, \hat S_t)}-\left(\delta_t^\H\right)^\top \cdot \condesph{\hat S_t},
\end{aligned}
\right.
\end{equation*}
for every $t \in [0,T]$,  where
\begin{align}
& h_0(t):= -\hat S^0_t \theta_1(t, X_t, S^1_t) \left\{\frac{\partial g}{\partial s_1}(t, X_t, S_t, \hat S_t) \ S^1_t \ \sigma_1(t, X_t, S^1_t) - \frac{\partial g}{\partial \hat s_0}(t, X_t, S_t, \hat S_t) \hat S^0_t \theta_1(t, X_t, S^1_t)\right.\nonumber\\
& \qquad  \left.+ \frac{\partial g}{\partial \hat s_1}(t, X_t, S_t, \hat S_t)(\sigma_1(t, X_t, S^1_t) -\theta_1(t, X_t, S^1_t))\hat S^1_t \right\}\nonumber\\
& \qquad - \hat S^0_t \rho \; \sigma_0(t, X_t) \theta_1 (t, X_t, S^1_t) \frac{\partial g}{\partial x}(t, X_t, S_t, \hat S_t)  - \hat S^0_{t} \int_Z\psi_\theta(\zeta; t, X_{t}, S^1_{t})\Delta g(\zeta; X_t, S_t, \hat S_t)\nu(\ud \zeta), \label{eq:h0}\\
& h_1(t): = \hat S^1_t (\sigma_1(t,X_t,S^1_t)-\theta_1(t,X_t,S^1_t)) \left\{\frac{\partial g}{\partial s_1}(t, X_t,S_t,\hat S_t) S^1_t \sigma_1(t, X_t,S^1_t)\right. \nonumber\\
&\qquad \left. - \frac{\partial g}{\partial \hat s_0}(t, X_t,S_t,\hat S_t)\hat S^0_t \theta_1(t,X_t,S^1_t)+ \frac{\partial g}{\partial \hat s_1}(t, X_t,S_t,\hat S_t)(\sigma_1(t,X_t,S^1_t)-\theta_1(t,X_t,S^1_t))\hat S^1_t\right\} \nonumber\\
 &\qquad + \hat S^1_t \rho \; \sigma_0(t, X_t, S^1_t) (\sigma_1(t, X_t, S^1_t) -\theta_1(t, X_t, S^1_t))\frac{\partial g}{\partial x}(t, X_t,S_t,\hat S_t)\nonumber\\
 &\qquad + \hat S^1_{t} \int_Z K_\theta(\zeta; X_{t}, S^1_{t}) \Delta g(\zeta; X_t, S_t,\hat S_t)\nu(\ud \zeta)   \label{eq:h1}
\end{align}
with
\[
\begin{aligned}
\Delta g(\zeta; t, x,s,\hat s):=& g\Big(t,x+K_0(\zeta;t,x),s_0, s_1(1+ K_1(\zeta;t,x,s_1)), \hat s_0(1-\psi_\theta(\zeta;t,x,s_1)), \hat s_1(1+K_\theta(\zeta;t,x,s_1))\Big)\\
&- g(t,x,s_0,s_1,\hat s_0,\hat s_1),
\end{aligned}
\]
as in Lemma \ref{lemma:generatore}, and
\begin{gather}
   a^{0,0}(t):= \left(\hat S^0_{t}\right)^2 \left(\theta_1^2(t, X_t, S^1_t)  +  \int_{Z}   \psi_\theta^2 (\zeta; t, X_{t}, S^1_{t}) \nu (\ud \zeta)\right),\label{eq:Sigma0}\\
   a^{1,1}(t):= \left(\hat S^1_{t}\right)^2\left[(\sigma_1(t, X_t, S^1_t)-\theta_1(t, X_t, S^1_t))^2  +  \int_{Z} K_\theta^2(\zeta; t, X_{t}, S^1_{t}) \nu (\ud \zeta)\right]\label{eq:Sigma1}\\
   a^{0,1}(t):= - \hat S^0_{t} \hat S^1_{t}\! \left[\theta_1(t, X_t, S^1_t) (\sigma_1(t, X_t, S^1_t)-\theta_1(t, X_t, S^1_t))\! +\!\! \int_Z \!\! \psi_\theta(\zeta;t,X_{t}, S^1_{t})K_\theta(\zeta;t, X_{t}, S^1_{t}) \nu (\ud \zeta)\right]. \label{eq:Sigma2}
\end{gather}

\end{proposition}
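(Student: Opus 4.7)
The plan is to combine the martingale characterization of the conditional expectation $g(t,X_t,S_t,\hat S_t)$ with the general explicit representation from Proposition \ref{explicit_delta} (in the form simplified by Example \ref{esempio1}), and then perform explicit bracket computations in the jump-diffusion model.

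For the PDE statement, the process $\{g(t,X_t,S_t,\hat S_t)\}_{t\in[0,T]}$ is by construction an $(\bF,\P)$-martingale. Applying Lemma \ref{lemma:generatore} to $f=g\in\mathcal{C}^{1,2,2,2}_b$, the semimartingale decomposition \eqref{eq:semimg_representation} identifies its finite-variation part as $\int_0^t\L^{X,S,\hat S}g(r,X_r,S_r,\hat S_r)\ud r$; being the difference of two $(\bF,\P)$-martingales it must vanish, giving $\L^{X,S,\hat S}g(t,X_t,S_t,\hat S_t)=0$ $\ud t\otimes\ud\P$-a.e. The terminal condition $g(T,x,s,\hat s)=\hat H(T,\hat s)$ then follows from the Markov structure \eqref{eq:claim} together with Assumption \ref{ass:misurabilitaS_T}.

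For the explicit strategy I apply Proposition \ref{explicit_delta} in the setting of Example \ref{esempio1}. Under the dynamics \eqref{eq:s0}--\eqref{eq:s1}, each cross-variation $\langle\hat S^i,\hat S^j\rangle$ is absolutely continuous w.r.t.\ $\ud t$, so one may take $B_t=t$, whence $\sigma_t=a(t)$ and $\varrho_t={}^p a(t)$. A direct calculation of $\ud\langle\hat S^i,\hat S^j\rangle_t$ from \eqref{eq:s0}--\eqref{eq:s1}, separating diffusion from jump contributions (the latter computed through the compensator $\nu(\ud\zeta)\ud t$), produces the entries \eqref{eq:Sigma0}--\eqref{eq:Sigma2}. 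To identify $\delta^\F$ I invoke the Galtchouk-Kunita-Watanabe decomposition \eqref{GKWdecomp} of the $(\bF,\P)$-martingale $g(t,X_t,S_t,\hat S_t)$, whose martingale representation $M^g$ is written out in \eqref{eq:mf}: since $M^g=\int(\delta^\F_u)^\top\cdot\ud\hat S_u+\tilde L^{\hat H}$ with $\langle\tilde L^{\hat H},\hat S^j\rangle\equiv 0$, one has $\ud\langle M^g,\hat S^i\rangle_t=\sum_j a^{i,j}(t)\,\delta^{\F,j}_t\,\ud t=(a_t\cdot\delta^\F_t)^i\,\ud t$. Computing $\langle M^g,\hat S^i\rangle$ directly from \eqref{eq:mf} and \eqref{eq:s0}--\eqref{eq:s1} (diffusion cross-terms involving $\rho$ and jump cross-terms involving the common jumps of $X,S^1,\hat S^0,\hat S^1$) matches exactly $h_i(t)$ as defined in \eqref{eq:h0}--\eqref{eq:h1}.

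It then remains to invert the symmetric $2\times 2$ matrix ${}^p a(t)$, which under the assumed non-vanishing determinant yields the stated closed-form expressions for $\delta^{\H,0}_t$ and $\delta^{\H,1}_t$ via $\delta^\H_t=({}^p a(t))^{-1}\cdot{}^p h(t)$. The formula for $\eta^\H_t$ is read off from Proposition \ref{explicit_delta}: using the tower property $\condesph{\hat H_T}=\condesph{g(t,X_t,S_t,\hat S_t)}$ and the $\bH$-predictability of $\delta^\H_t$, which allows factoring it outside the conditional expectation on $\hat S_t$. I expect the main obstacle to be the bookkeeping in the computation of \eqref{eq:h0}--\eqref{eq:h1}: tracking all diffusion cross-terms weighted by $\rho$ and, more delicately, all jump cross-terms, where one must carefully use the compensation formula for $\tilde N$ and exploit the fact that the common jumps of $(X,S^1)$ and of $(\hat S^0,\hat S^1)$ are synchronized through the shared random measure. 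The required integrability for the various martingale identifications and $\bH$-projections is furnished by Assumption \ref{ass:strategia} together with \eqref{hp:nonexplosion}, and is routine to verify.
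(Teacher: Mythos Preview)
Your proposal is correct and follows essentially the same route as the paper: the PDE is obtained by matching the semimartingale decomposition from Lemma \ref{lemma:generatore} against the martingale property of $g(t,X_t,S_t,\hat S_t)$, and the strategy is derived by computing the brackets $\langle M^g,\hat S^i\rangle_t=\int_0^t h_i(r)\,\ud r$ and $\langle \hat S^i,\hat S^j\rangle_t=\int_0^t a^{i,j}(r)\,\ud r$, then applying the formula of Example \ref{esempio1} in the form $\delta^\H_t=({}^p a(t))^{-1}\cdot{}^p h(t)$ (using $a(t)\cdot\delta^\F_t=h(t)$) and the tower property for $\eta^\H$. The paper's proof is organized in the same way, with the same key identifications.
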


\begin{proof}
First, let $g\in C^{1,2,2,2}_b\left([0,T]\times \mathbb{R} \times \mathbb{R}_+^2 \times \R_+^2\right)$ and apply It\^{o}'s formula. We get that for every $t \in [0,T]$
\begin{equation}\label{g}
g(t,X_t,S_t,\hat S_t)= g(0,X_0,S_0, \hat S_0)+ \int_0^t \L^{X,S,\hat S}  g(r,X_r,S_r,\hat S_r)\ud r + M^g_t,
\end{equation}
where $M^g=\{M_t^g,\ t \in [0,T]\}$ is an $(\bF, \P)$-martingale thanks to Assumption \ref{ass:strategia}. By definition, $g(t, X_t, S_t, \hat S_t)=\condespf{\hat H_T}$ for each $t \in [0,T]$ and therefore $g(t, X_t, S_t, \hat S_t)$ is an $(\bF,\P)$-martingale; this means that all finite variation terms in \eqref{g} vanish and implies that $g$ solves the problem \eqref{eq:dir}.

We now focus on the second part. Since $\hat S$ is an $(\bF,\P)$-local martingale, $\hat H_T$ can be decomposed as (see \eqref{GKWdecomp})
\begin{equation*}
\hat H_T= \tilde H_0 +\int_0^T (\delta^{\F}_r)^\top \cdot \ud \hat S_r + \tilde L_T^{\hat H}\quad \P-\mbox{a.s.}.
\end{equation*}

By Theorem \ref{prop:fs} in Section \ref{sec:brm}, we know that there exists a unique $\bH$-benchmarked risk-minimizing strategy $\delta^{\H}$ that corresponds to the integrand appearing in decomposition \eqref{eq:GKWpartial}. Moreover, it can be characterized in terms of $\delta^{\F}$ thanks to Proposition \ref{explicit_delta}.

We denote by $P$ the $(\bF,\P)$-martingale given by
\begin{equation}\label{V(delta*)}
P_t:=\condespf{\hat H_T}=g(t,X_t,S_t, \hat S_t),
\end{equation}
for every $t \in [0,T]$. Then, 
by the Galtchouk-Kunita-Watanabe decomposition of $\hat H_T$ under complete information, \eqref{V(delta*)} becomes
\begin{equation}
P_t=\tilde H_0+\condespf{\int_0^T  (\delta^{\F}_r)^\top \cdot \ud \hat S_r} + \condespf{ \tilde L_T^{\hat H}} = \tilde H_0+\int_0^t (\delta^\F_r)^\top \cdot \ud \hat S_r + \tilde L_t^{\hat H},  \label{relazioni-mg}
\end{equation}
for every $t \in [0,T]$. The process $\delta^{\F}$ can be computed in terms of the 
density of the sharp bracket of $P$ and $\hat S$ with respect to the sharp bracket of  $\hat S$.
Indeed
\begin{align*}
\langle P,\hat S\rangle_t  =\left\langle\int (\delta^\F_r)^\top \cdot \ud \hat S_r,\hat S\right\rangle_t+\langle \tilde L^{\hat H},\hat S\rangle_t  =\left\langle \int (\delta^\F_r)^\top \cdot \ud \hat S_r,\int \ud \hat S_r \right\rangle_t=\int_0^t  (\delta^\F_r)^\top \cdot \ud \langle \hat S\rangle_r.
\end{align*}
Then for each component we get
\begin{align*}
\langle P,\hat S^0 \rangle_t  =\int_0^t  \delta^{\F,0}_r \cdot \ud \langle \hat S^0\rangle_r +\int_0^t  \delta^{\F,1}_r \cdot \ud \langle \hat S^0, \hat S^1\rangle_r
\end{align*}
and
\begin{align*}
\langle P,\hat S^1 \rangle_t  =\int_0^t  \delta^{\F,0}_r \cdot \ud \langle \hat S^0, \hat S^1\rangle_r + \int_0^t  \delta^{\F,1}_r \cdot \ud \langle \hat S^1\rangle_r.
\end{align*}

Now observe that by the expression of the $(\bF,\P)$-martingale $M^g$ which is given by \eqref{eq:mf} replacing $f$ by $g$, and those of $\hat S^0$ and $\hat S^1$ we get that
\begin{eqnarray}
 \langle P,\hat S^0 \rangle_t= \int_0^t h_0(r) \ud r, \quad t \in [0,T],
\end{eqnarray}
where $h_0(t)$ is given by \eqref{eq:h0},
\begin{equation*}
  \langle \hat S^0\rangle_t= \int_0^t  (\hat S^0_{r})^2 \left\{  \theta_1^2(r, X_t, S^1_r)  +  \int_{Z}   \psi_\theta^2 (\zeta;r, X_{t}, S^1_{t}) {\nu}(\ud \zeta) \right\} \ud r =\int_0^t  a^{0,0}(r) \ud r,
\end{equation*}
for each $t \in [0,T]$, with $a^{0,0}(t)$ given by \eqref{eq:Sigma0},
\begin{eqnarray}
 \langle P,\hat S^1 \rangle_t= \int_0^t  h_1(r) \ud r, \quad t \in [0,T]
\end{eqnarray}
where $h_1(t)$ is given by \eqref{eq:h1}, while
\begin{equation*}
    \langle \hat S^1\rangle_t= \int_0^t  (\hat S^1_{r})^2 \left\{  (\sigma_1(r, X_r, S^1_r)-\theta_1(r, X_r, S^1_r))^2  +  \int_{Z} K^2_\theta(\zeta; r, X_{r}, S^1_{r})\nu(\ud \zeta) \right\} \ud r =\int_0^t a^{1,1}(r) \ud r,
\end{equation*}
for every $t \in [0,T]$, with $a^{1,1}(t)$ given by \eqref{eq:Sigma1} and finally
\begin{equation*}
\langle\hat S^0, \hat S^1\rangle_t= \int_0^t a^{0,1}(r) \ud r,
\end{equation*}
for every $t \in [0,T]$, with $a^{0,1}(t)$ given by \eqref{eq:Sigma2}.

Hence $\delta^\F=(\delta^{\F,0},\delta^{\F,1})^\top$ satisfies  $a(t) \ \delta^{\F}_t =  h(t)$ where $a$ is the matrix
\[
a(t):=\left(\begin{array}{cc}
a^{0,0}(t)& a^{0,1}(t)\\
a^{0,1}(t)& a^{1,1}(t) \end{array} \right)
\]
and $h$ is the vector $h(t)=(h_0(t), h_1(t))^\top$.
Finally we can characterize explicitly the benchmarked $\bH$-risk-minimizing strategy $\delta^\H$, by \eqref{eq:deltah}. More precisely, for every $t \in [0,T]$, we have

\[
\delta^{\H,0}_t=
\frac{{}^pa^{0,0}(t) \  {}^ph_0(t)- {}^pa^{0,1}(t) \ {}^ph_1(t)}
{{}^pa^{0,0}(t) \ {}^pa^{1,1}(t) - \left[{}^pa^{0,1}(t)\right]^2 }
\]
\[
\delta^{\H,1}_t=
\frac{ - {}^pa^{0,1}(t)  \ {}^ph_0(t) + {}^pa^{1,1}(t) \ {}^ph_1(t)  }
{ {}^pa^{0,0}(t) \ {}^pa^{1,1}(t) - \left[{}^pa^{0,1}(t)\right]^2 }
\]
Finally note that
\begin{gather*}
\eta_t^\H=\condesph{\hat H_T-\left( \delta^\H \right)^\top \cdot \hat S_t}=\condesph{\hat H_T}-\condesph{\left( \delta^\H \right)^\top \cdot \hat S_t}\\
=\condesph{\condespf{\hat H_T}}- \left( \delta^\H \right)^\top \cdot \condesph{\hat S_t} = \condesph{g(t,X_t, S_t, \hat S_t)}-\left(\delta_t^\H\right)^\top \cdot \condesph{\hat S_t},
\end{gather*}
and this concludes the proof.

\end{proof}

\begin{remark}
We consider now the case where investors can observe the dynamics of the saving account $S^0$, the primary security account $S^1$ and the num\'{e}raire portfolio $S^{\delta_*}$ but not the stochastic factor $X$ that affects their dynamics. This means that the filtration $\bH$ coincides with the filtration generated by the pair $(S, S^{\delta_*})$, i.e. $\H_t = \F^{S, S^{\delta*}}_t$ for every $t \in[0,T]$. Then, in this framework the computation of the benchmarked $\bH$-risk-minimizing strategy leads to a filtering problem with jump diffusion observations, where the signal process is given by the unobservable stochastic factor $X$ and the observation is the pair $(S, S^{\delta_*})$. The solution of the associated filtering problem allows us to provide an explicit representation of the optimal strategy in terms of the so-called filter. More precisely, at any time $t\in [0,T]$, the filter, defined as  $\pi_t(f) =  \condesph{f(t,X_t)}$ for any integrable process $f(t,X_t)$, provides the conditional law of the stochastic factor $X_t$ given the observed history $\H_t$. In particular, the $\bH$-predictable projection of any integrable process of the form $F(t,X_t,S_t,\hat S_t )$ can be computed in terms of the filter in the following way

$${}^pF(t,X_t,S_t,\hat S_t) = \pi_{t^-} (F(t,X_t, S_{t^-},\hat S_{t^-}))$$

where $\pi_{t^-}$ denotes the left version of $\pi_t$. We refer to~\cite{cco1}  and~\cite{cco2}  for the solution of the filtering problem in the framework of partially observed jump-diffusion systems.
\end{remark}

\subsection{GOP and Risk-neutral measures for a jump diffusion driven market model} \label{sec:rel}
The benchmarked risk minimization only requires the existence of the GOP  without making the restrictive assumption of the existence of a risk neutral probability measure, as in the classical risk minimization. In Section \ref{sec:jumpdiff} we investigated a Markovian jump-diffusion market model and we observed  that no-arbitrage implies the existence of a GOP for  such a model.  The purpose of this section is to discuss the relationship between the GOP and the existence of a martingale measure for a general jump-diffusion driven market model. More precisely, we will show that existence of  a martingale measure in fact implies the existence of  the GOP and that  the converse implication  is obtained  under additional integrability conditions (see Proposition \ref{prop:GOP} below).
We assume that the dynamics of the processes $S^0$ and $S^1$ are similar to that introduced in Section \ref{sec:jumpdiff}, see the second and third equations in system \eqref{eq:sistema}, but with general coefficients. In particular,

\begin{equation} \label{eq:riskless_gen}
\ud S^0_t=S^0_t r(t)\ud t,
\end{equation}
and
\begin{equation}\label{eq:risky_gen}
\ud S^1_t=S^1_{t^-}\left(b_1(t)\ud t+\sigma_1(t)\ud W_t+\int_Z K_1(\zeta;t)\tilde N(\ud \zeta, \ud t)\right)
\end{equation}
for $t \in [0,T]$, with $S^0_0=1$  and $S^1_0 > 0$.
Here we assume that the processes $r=\{ r(t),\ t \in[0,T]\}$ and $\sigma_1 = \{\sigma_1(t), \ t \in [0,T]\}$ are $\bF$-adapted and nonnegative, $b_1=\{b_1(t), \ t\in[0,T]\}$ is $\bF$-adapted and $K_1=\{K_1(\zeta; t), \ t \in [0,T]\}$ is $\bF$-predictable and that all the processes satisfy the integrability conditions stated in Section \ref{sec:jumpdiff}.

\begin{definition}
We say that a probability $\Q$ on $(\Omega, \F)$ is a martingale measure if  $\Q$ is locally equivalent to $\P$ $($i.e. $\Q|_{\F_t}$ is equivalent to
$\P |_{\F_t} )$
and the process $S^1e ^{-\int r(s) \ud s}$ is an $(\bF,\Q)$-local martingale.
\end{definition}
\noindent To this aim we need a suitable version of the Girsanov theorem, which we recall for reader's convenience.
\begin{theorem}\label{Gi}
Let $\xi=\{\xi(t), \ t \in [0,T]\}$ and $\eta(\zeta;\cdot)=\{\eta(\zeta;t),\ t \in [0,T]\}$ be $\bF$-predictable processes such that for any finite $t \in \R_+$, the following conditions hold:
\begin{equation}\label{altre}
 \int_0^t \xi(s)^2 \ud s < \infty, \quad 1+  \eta(\zeta ; t) > 0, \quad \int_0^t  \int_Z |\eta(\zeta ; s) + 1| \nu(\ud\zeta)\ud s < \infty, \quad \P-{\rm a.s.}
 \end{equation}
for every $t \in [0,T]$.
Define the process $L=\{L_t,\ t \in [0,T]\}$ as
 \begin{equation} \label{density}
\ud L_t=L_{t^-}\left(\xi(t)\ud W_t+\int_Z \eta(\zeta ; t)\tilde N(\ud \zeta, \ud t)\right)
\end{equation}
and suppose that for all finite $t \in \R_+$,  $\esp{L_t}=1.$
\noindent
Then, there exists a probability measure $\Q$ on $\bF$ locally equivalent to $\P$ with
$$
\left.\frac{\ud \Q }{ \ud \P}\right|_{\F_t} = L_t,\quad t \in [0,T]
$$
such that the process
$$
\ud W^{\mathbf Q}_t:=\ud W_t - \xi(t)\ud t, \quad t \in [0,T],
$$
is an $(\bF,\Q)$-Wiener process and the intensity measure of $N$ under $\Q$ is given by
$$
\nu^{\Q}(\ud \zeta)\ud t=(1+\eta(\zeta ; t))\nu(\ud \zeta)\ud t.
$$
\noindent Moreover, the following assumption:
\begin{ass} \label{internal}
{\em The filtration $\bF$ is the natural filtration of $W$ and $N$; i.e.,
\begin{equation}
 \F_t = \sigma\{ W_s, N(A \times (0,s]), B;\ 0\le s \le t,\ A \in \mathcal Z,\ B \in \N \}, \quad t \in [0,T],
 \end{equation}
where $\N$ is the collection of $\P$-null sets from $\F$},
\end{ass}
\noindent implies that every probability measure  $\Q$ locally equivalent to $\P$ has the structure above.
\end{theorem}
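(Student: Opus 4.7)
The plan is to prove Theorem \ref{Gi} along the classical lines of the Girsanov theorem for jump-diffusion models, breaking it into three main blocks: (i) existence and martingality of $L$, (ii) verification that the candidate measure $\Q$ changes the driving noises in the prescribed way, and (iii) the converse statement under Assumption \ref{internal}. Throughout, the conditions in \eqref{altre} together with the hypothesis $\esp{L_t}=1$ will be used to ensure that all stochastic integrals are well defined and that $L$ is a genuine (not merely local) martingale.

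First, I would observe that the SDE \eqref{density} is solved explicitly by the Dol\'eans-Dade exponential
\[
L_t=\exp\left(\int_0^t \xi(s)\ud W_s-\tfrac12\int_0^t \xi(s)^2\ud s+\int_0^t\!\!\int_Z \log(1+\eta(\zeta;s))\,N(\ud\zeta,\ud s)-\int_0^t\!\!\int_Z \eta(\zeta;s)\nu(\ud\zeta)\ud s\right).
\]
The integrability assumptions in \eqref{altre} ensure that each of the four integrals is finite $\P$-a.s., while the condition $1+\eta(\zeta;t)>0$ guarantees $L_t>0$. A direct application of It\^{o}'s formula shows $L$ is a strictly positive $(\bF,\P)$-local martingale, and the hypothesis $\esp{L_t}=1$ upgrades it to a true martingale. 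Hence the family $\Q|_{\F_t}:=L_t\cdot \P|_{\F_t}$ is consistent and defines a probability measure $\Q$ locally equivalent to $\P$ on each $\F_t$.

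Next, I would verify the distributional statements. For the continuous part, I would apply the standard Girsanov-Meyer theorem: the $\Q$-drift correction of a $\P$-local martingale $M$ is $\int L_{s-}^{-1}\ud\langle M,L\rangle_s$. Computing $\langle W,L\rangle_t=\int_0^t L_{s-}\xi(s)\ud s$ from \eqref{density} gives $W^\Q_t=W_t-\int_0^t\xi(s)\ud s$ as an $(\bF,\Q)$-continuous local martingale with quadratic variation $t$, so L\'evy's characterization identifies it as a $\Q$-Wiener process. For the jump part, I would invoke the Girsanov theorem for random measures (e.g.\ Jacod-Shiryaev, Theorem III.3.17): since $L$ has the multiplicative jump structure $\Delta L_s/L_{s-}=\int_Z \eta(\zeta;s)N(\ud\zeta,\{s\})$, the $\Q$-compensator of $N$ is $(1+\eta(\zeta;t))\nu(\ud\zeta)\ud t$, as claimed.

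Finally, for the converse under Assumption \ref{internal}, let $\Q$ be any probability measure locally equivalent to $\P$, and set $L_t:=\ud\Q/\ud\P|_{\F_t}$. Then $L$ is a strictly positive $\P$-martingale with $L_0=1$. Since $\bF$ is the natural filtration of $W$ and $N$, the predictable representation property holds, and every $(\bF,\P)$-martingale can be written as a stochastic integral of an $\bF$-predictable integrand against $W$ and the compensated measure $\tilde N$. Applying this to $L$ and writing $\xi(t):=\zeta^W(t)/L_{t-}$, $\eta(\zeta;t):=\zeta^N(\zeta;t)/L_{t-}$ from the representation of $L$, one recovers precisely the SDE \eqref{density}; positivity of $L$ forces $1+\eta(\zeta;t)>0$. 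The main obstacle in this program is the rigorous handling of the jump-measure change under possibly infinite L\'evy measure $\nu$, which is why the integrability conditions in \eqref{altre} and the true-martingale property $\esp{L_t}=1$ are essential: they prevent the compensator of $N$ from being altered in a non-integrable way and ensure that the Girsanov formula for random measures applies without ambiguity.
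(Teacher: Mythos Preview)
The paper does not prove Theorem \ref{Gi}: it is explicitly introduced with the phrase ``we need a suitable version of the Girsanov theorem, which we recall for reader's convenience'' and is stated without proof, the authors treating it as a standard result from the literature (essentially Jacod--Shiryaev/Protter type Girsanov for Brownian motion plus a Poisson random measure). Your sketch follows exactly the classical route one would use to prove such a statement --- Dol\'eans-Dade exponential, Girsanov--Meyer for the Brownian part via $\langle W,L\rangle$, the change-of-compensator formula for the random measure, and the predictable representation property under the internal filtration for the converse --- and is correct in outline; there is nothing in the paper to compare it against.
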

We are now in the position to characterize the martingale measures for the market model considered.
\begin{proposition}
(i) Let $\Q$ be a probability measure locally equivalent  to $\P$ with
$$\left.\frac{\ud \Q }{ \ud \P}\right|_{\F_t} = L_t,$$
where $L$ is given in \eqref{density}, satisfying \eqref{altre} and $ \int_Z |K_1(\zeta;t)| | \eta(\zeta ;t)| \nu(\ud \zeta) < \infty$.
Let us assume
\begin{equation}\label{equi}
\xi(t) \sigma_1(t) + \int_Z K_1( \zeta;t) \eta( \zeta ; t) \nu(\ud \zeta) = r(t) -b_1(t), \quad \ud t \otimes \ud \P-a.e.
\end{equation}
then $\Q$ is a martingale measure.

\noindent (ii) Let $\Q$ be a martingale  measure. Under Assumption  \ref{internal}, $\Q$ has the structure given in Theorem \ref{Gi} and satisfies \eqref{equi}.
\end{proposition}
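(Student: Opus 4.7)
The plan is to treat the two implications separately, in both cases reducing the statement to a computation of the $\Q$-dynamics of the discounted price and invoking the Girsanov statement already recorded in Theorem \ref{Gi}.

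For part (i), I would start from the density $L$ given by \eqref{density} and apply Theorem \ref{Gi} directly. This produces an $(\bF,\Q)$-Brownian motion $W^\Q$ via $\ud W_t = \ud W^\Q_t + \xi(t)\ud t$ and a $\Q$-compensator $(1+\eta(\zeta;t))\nu(\ud\zeta)\ud t$ for $N$, so that
\[
\tilde N(\ud\zeta,\ud t) = \tilde N^\Q(\ud\zeta,\ud t) + \eta(\zeta;t)\nu(\ud\zeta)\ud t,
\]
where $\tilde N^\Q$ is the $\Q$-compensated jump measure. Substituting these decompositions into \eqref{eq:risky_gen} and using integration by parts with $S^0_t = e^{\int_0^t r(s)\ud s}$, the discounted price $\tilde S^1_t := S^1_t e^{-\int_0^t r(s)\ud s}$ satisfies
\[
\ud \tilde S^1_t = \tilde S^1_{t^-}\Bigl\{\bigl[b_1(t)-r(t)+\xi(t)\sigma_1(t)+\int_Z K_1(\zeta;t)\eta(\zeta;t)\nu(\ud\zeta)\bigr]\ud t + \sigma_1(t)\ud W^\Q_t + \int_Z K_1(\zeta;t)\tilde N^\Q(\ud\zeta,\ud t)\Bigr\}.
\]
The integrability hypothesis on $|K_1\eta|$ together with \eqref{altre} guarantees that the two stochastic integrals are well-defined $(\bF,\Q)$-local martingales, and \eqref{equi} forces the finite-variation bracket to vanish, whence $\tilde S^1$ is an $(\bF,\Q)$-local martingale.

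For part (ii), the key tool is the martingale representation property valid under Assumption \ref{internal}. Since $\Q$ is locally equivalent to $\P$, its density process $L_t = \ud\Q/\ud\P|_{\F_t}$ is a strictly positive $(\bF,\P)$-martingale, and by the representation theorem for the natural filtration of $W$ and $N$ it can be written as a stochastic integral with respect to $W$ and $\tilde N$. Setting $\xi(t) := \ud\langle L,W\rangle_t/(L_{t^-}\ud t)$ and $\eta(\zeta;t) := \Delta L_t/L_{t^-}$ on the jumps of $N$, one recovers precisely \eqref{density}; strict positivity of $L$ gives $1+\eta(\zeta;t)>0$, and local integrability of $L$ yields the remaining conditions in \eqref{altre}. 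With this structure of $L$ in hand, Theorem \ref{Gi} applies and the computation of part (i) shows that the drift of $\tilde S^1$ under $\Q$ equals
\[
b_1(t)-r(t)+\xi(t)\sigma_1(t)+\int_Z K_1(\zeta;t)\eta(\zeta;t)\nu(\ud\zeta).
\]
Because $\tilde S^1$ is an $(\bF,\Q)$-local martingale by hypothesis and admits the canonical special semimartingale decomposition, this predictable finite-variation part must vanish $\ud t\otimes \ud\P$-a.e., which is exactly \eqref{equi}.

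The main obstacle I expect is the careful justification of the structural representation of $L$ in part (ii): one needs the martingale representation theorem in the jump-diffusion setting, verification that the integrands extracted from $L$ satisfy the integrability constraints \eqref{altre} (in particular the $\sigma$-finite control on $\int|\eta+1|\,\nu(\ud\zeta)\ud s$), and the argument that $\esp{L_t}=1$ is automatic once we know $\Q$ is a probability measure. Everything else is a direct Girsanov-type computation.
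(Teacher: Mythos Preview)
Your proposal is correct and follows essentially the same route as the paper: in part (i) you apply Theorem \ref{Gi} to rewrite the dynamics of the discounted price under $\Q$ and observe that \eqref{equi} kills the drift, exactly as the paper does; in part (ii) both you and the paper use the internal-filtration hypothesis to force the structural form \eqref{density} on the density process and then read off \eqref{equi} from the vanishing of the predictable finite-variation part. The only difference is that you sketch the martingale-representation argument for $L$ in detail, whereas the paper simply invokes the last sentence of Theorem \ref{Gi}, which already records that under Assumption \ref{internal} every locally equivalent measure has this structure --- so your ``main obstacle'' is in fact already absorbed into that theorem and need not be reproved.
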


\begin{proof}
(i) For every $t \in [0,T]$, set $\displaystyle \widetilde{S}^1_t=\frac{S^1_t}{S^0_t}$. Then $\widetilde S^1$ satisfies
\begin{equation} \label{Clarisky}
\begin{aligned}
\ud \widetilde{S}^1_t&=  \widetilde{S}^1_{t^-}\left( [ b_1(t) - r(t) ] \ud t+\sigma_1(t)\ud W_t+\int_Z K_1(\zeta;t)\tilde N(\ud \zeta, \ud t)\right)\\
& = \widetilde{S}^1_{t^-} \bigg( [ b_1(t) - r(t) + \xi(t)  \sigma_1(t) +  \int_Z \eta( \zeta ;t)  K_1(\zeta;t) \nu( \ud \zeta) ] \ud t\\
& \quad + \sigma_1(t) \ud W^{\mathbf Q}_t+ \int_Z K_1(\zeta;t) (N(\ud \zeta, \ud t) - \nu^{\mathbf Q}(\ud \zeta)\ud t) \bigg) \quad \mbox{ for every } \ t \in [0,T],
\end{aligned}
\end{equation}
and condition  \eqref{equi} implies that $\widetilde{S}^1$ is an $(\bF,\Q)$-local martingale.

\noindent (ii) The converse heavily depends on the fact that we have assumed the internal filtration, i.e. Assumption \ref{internal}. This condition implies that  every probability measure  $\Q$ locally equivalent to $\P$ has the structure given in Theorem \ref{Gi} and since $\tilde S^1$ is an $(\bF,\Q)$-local martingale  then \eqref{equi} is fulfilled.
\end{proof}
 \begin{proposition}\label{prop:GOP}
(i) If the processes $\theta_1$ and $\psi_\theta$ are such that for every $t\in [0,T]$
\begin{equation} \label{Ksh}
 \mathbb{E}\left[ \exp\left\{ \int_0^t \theta_1(s)^2 ds + \int_0^t \int_Z \psi_\theta(\zeta; s) ^2 \nu(\ud \zeta) \ud s \right\}\right] < \infty  \end{equation}
 and
 \begin{equation} \label{Ksh1}
 \int_0^t \int_Z| 1-  \psi_\theta(\zeta; s)| \nu(\ud \zeta) \ud s  < \infty \quad \P-a.s.,
 \end{equation}
 then there exists a  martingale measure  $\mathbf Q$ defined  as $$\left.\frac{\ud \mathbf Q }{ \ud \P}\right|_{\F_t} = L_t, \quad \mbox{for every } \ t \in [0,T],$$
where $L$ is given in \eqref{density} with $(\xi, \eta) =(- \theta_1, -\psi_\theta)$ and satisfying \eqref{altre}.

\noindent (ii) Under Assumption \ref{internal}, the existence of a  martingale measure  $\mathbf Q$  implies the existence of the GOP with $( \theta_1, \psi_\theta) = (-\xi, - \eta)$.
 \end{proposition}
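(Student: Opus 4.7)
The plan is to verify both parts by matching the hypotheses to those of the preceding proposition under the identification $(\xi,\eta)=(-\theta_1,-\psi_\theta)$.

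For part (i), I would first check that the Girsanov-type conditions \eqref{altre} of Theorem \ref{Gi} hold for $(\xi,\eta)=(-\theta_1,-\psi_\theta)$. The condition $\int_0^t \xi(s)^2 \ud s < \infty$ is immediate from \eqref{Ksh}; the positivity $1+\eta(\zeta;t) = 1-\psi_\theta(\zeta;t) > 0$ follows from \eqref{nuova}; and $\int_0^t \!\int_Z |1+\eta|\,\nu(\ud\zeta)\ud s = \int_0^t \!\int_Z |1-\psi_\theta|\,\nu(\ud\zeta)\ud s < \infty$ is precisely \eqref{Ksh1}. Next, I would invoke a Novikov-type criterion for exponential local martingales driven by a Wiener process and a compensated Poisson random measure (a Protter--Shimbo-style argument, or the version found e.g. in \cite{ph}, Chapter 9) to deduce from \eqref{Ksh} that the local martingale $L$ defined by \eqref{density} is in fact a true $(\bF,\P)$-martingale with $\esp{L_t}=1$, so that Theorem \ref{Gi} produces a genuine probability $\Q$ locally equivalent to $\P$.

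It then remains to verify condition \eqref{equi} of part (i) of the preceding proposition. Substituting $(\xi,\eta)=(-\theta_1,-\psi_\theta)$, \eqref{equi} becomes
\[
\theta_1(t)\sigma_1(t) + \int_Z K_1(\zeta;t)\psi_\theta(\zeta;t)\,\nu(\ud\zeta) = b_1(t)-r(t),
\]
which is exactly the defining property \eqref{def:gamma} of the risk premium functional combined with its representation \eqref{gammarepre}. Hence $\Q$ is a martingale measure, concluding (i).

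For part (ii), the argument runs essentially in reverse. Assume $\Q$ is a martingale measure. Assumption \ref{internal} makes every locally equivalent probability fit the framework of Theorem \ref{Gi}, so there exist $\bF$-predictable processes $(\xi,\eta)$ satisfying \eqref{altre} such that $\ud\Q/\ud\P|_{\F_t}=L_t$ is the Dol\'{e}ans--Dade exponential \eqref{density}. Part (ii) of the preceding proposition then yields \eqref{equi}. Setting $\theta_1:=-\xi$ and $\psi_\theta:=-\eta$, condition \eqref{equi} becomes exactly the market price of risk representation \eqref{gammarepre} for the functional $\Gamma$ characterized by \eqref{def:gamma}, while $1+\eta>0$ translates into $\psi_\theta<1$, which is \eqref{nuova}. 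With these $(\theta_1,\psi_\theta)$, the explicit candidate \eqref{max} produces strictly positive portfolio weights, and the computations of Section \ref{sec:jumpdiff} (in particular \eqref{GOPdyn}, \eqref{eq:s0}, \eqref{eq:s1}, and the first-order condition \eqref{eq:firstorder}) together with Lemma~2.5 of \cite{cp2005} show that the associated portfolio $S^{\delta_*}$ is the GOP.

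The main obstacle I anticipate is the true-martingale property of $L$ in part (i): verifying that the exponential local martingale is a genuine martingale under the hypothesis \eqref{Ksh} requires invoking a jump-adapted Novikov-type criterion. Everything else is a direct rearrangement of earlier results, since \eqref{def:gamma}--\eqref{gammarepre} and \eqref{equi} are algebraically the same identity up to the sign change $(\xi,\eta)=(-\theta_1,-\psi_\theta)$.
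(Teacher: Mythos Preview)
Your proposal is correct and follows essentially the same route as the paper: verifying \eqref{altre} for $(\xi,\eta)=(-\theta_1,-\psi_\theta)$, invoking a Protter--Shimbo/Novikov-type criterion from \eqref{Ksh} to obtain the true-martingale property of $L$, and then recognizing that \eqref{equi} and the market price of risk identity \eqref{def:gamma}--\eqref{gammarepre} are the same relation up to sign. The paper's proof is considerably terser---especially for part (ii), which it dispatches as ``a direct consequence of \eqref{equi}''---but your more detailed expansion is entirely consistent with the intended argument.
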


\begin{proof}
(i) First let us observe that since $\psi_\theta(\zeta;t) < 1$ for every $t \in [0,T]$ and for every $\zeta \in Z$, then $1 + \eta(\zeta ; t) > 0$ for every $t \in [0,T]$ and every $\zeta \in Z$ and by   \eqref{Ksh} also
$ \esp{\int_0^t \xi(s)^2 \ud s}< \infty$ and $ \esp{\int_0^t \int_Z |\eta(\zeta ; s)| \nu(\ud\zeta)} < \infty$. Hence, $L$ is a nonnegative $(\bF, \P)$-local martingale and by Kazuhiro-Shimbo criterium (see \cite{pp2004}  p.141 and 358) it is a true $(\bF, \P)$-martingale. Finally, equation \eqref{gammarepre} implies \eqref{equi}.
\noindent 

(ii) It is a direct consequence of \eqref{equi}.
\end{proof}

\begin{remark}
Let us notice that if $\theta_1$, and $\psi_\theta$ are bounded processes and in addition $\nu(Z) < \infty$, then conditions \eqref{Ksh} and \eqref{Ksh1} are satisfied, hence there exists a martingale measure for our market.
\end{remark}

\end{document}